\newcommand{\E}{\mathbb{E}}
\newtheorem{theorem}{Theorem}
\newtheorem{lemma}[theorem]{Lemma}
\newtheorem{corollary}[theorem]{Corollary}
\newtheorem{defn}[theorem]{Definition}
\def\squareforqed{\hbox{\rule{2.5mm}{2.5mm}}}
\def\blackslug{\rule{2.5mm}{2.5mm}}
\def\qed{\hfill\blackslug}
\def\QED{\ifmmode\squareforqed % in mathmode : print just the square
  \else{\nobreak\hfil   % \hfil to end of current line
    \penalty50                 % penalty 50 for breaking the line here
    \hskip1em                  % leave at least 1em before the square
    \null                      % \hbox{}
    \nobreak                   % prohibit line break
    \hfil                      % another \hfil (if a break occurred)
    \squareforqed              % put the square here
    \parfillskip=0pt           % the line really ends here
    \finalhyphendemerits=0     % ignore a hyphen on the line above
    \endgraf}                  % end the paragraph
  \fi}
\def\blksquare{\rule{2mm}{2mm}}
\def\qedsymbol{\blksquare}
\newcommand{\bg}[1]{\medskip\noindent{\bf #1}}
\newcommand{\ed}{{\hfill\qedsymbol}\medskip}
\newenvironment{proof}{\bg{Proof. }}{\ed}
\newenvironment{proofof}[1]{\bg{Proof of #1: }}{\ed}
\newcommand{\comment}[1]{}
\newcommand{\R}{\ensuremath{\mathbb{R}}}
\newcommand{\C}{\ensuremath{\mathcal{C}}}
\newcommand{\G}{\ensuremath{\mathcal G}}
\newcommand{\opt}{\hbox{OPT}}
\begin{document}

\title{Strong Price of Anarchy and Coalitional Dynamics}

\author{
Yoram Bachrach\thanks{Microsoft Research Cambridge, UK. {\tt yobach@microsoft.com} }
\and
Vasilis Syrgkanis\thanks{Cornell University.
{\tt vasilis@cs.cornell.edu}. Work performed in part while an intern with Microsoft Research. Supported in part by ONR grant N00014-98-1-0589 and NSF grants CCF-0729006 and a Simons Graduate Fellowship.}
\and
\'{E}va Tardos\thanks{Cornell University.
{\tt eva@cs.cornell.edu}}
\and
Milan Vojnovi\'c\thanks{Microsoft Research Cambridge, UK.
{\tt milanv@microsoft.com}}
}

\maketitle

\begin{abstract}
We introduce a framework for studying the effect of cooperation on the quality of outcomes in utility games. Our framework is a coalitional analog of the smoothness framework of non-cooperative games. Coalitional smoothness implies bounds on the strong price of anarchy, the loss of quality of coalitionally stable outcomes, as well as bounds on coalitional versions of coarse correlated equilibria and sink equilibria, which we define as out-of-equilibrium myopic behavior as determined by a natural coalitional version of best-response dynamics.

Our coalitional smoothness framework captures existing results bounding the strong price of anarchy of network design games. We show that in any monotone utility-maximization game, if each player's utility is at least his marginal contribution to the welfare, then the strong price of anarchy is at most $2$. This captures a broad class of games, including games with a very high price of anarchy. Additionally, we show that in potential games the strong price of anarchy is close to the price of stability, the quality of the best Nash equilibrium.
\end{abstract}
\section{Introduction}
% !TEX root=coalitional-smoothness.tex
We introduce a framework for studying the effect of cooperation on the quality of outcomes in games. In the past decade we have developed a good understanding about %quality loss
the degradation in social welfare
in games due to selfish play, quantified by the price of anarchy. There are tight bounds known for the price of anarchy in a range of games from routing, through network design, to various scheduling games. Much less is understood about outcomes of games where players may cooperate.

However, in many settings players do cooperate, and in many games cooperation can help improve the outcome. The worst possible Nash equilibrium is a very pessimistic prediction of the outcome in games that are not strictly competitive, and where cooperation may improve the utility for all participants. A key issue in understanding cooperative outcomes is to what extend the players can transfer utility among each-other. Maybe the two dominant notions of cooperative outcomes that have been considered in the literature are  the strong Nash equilibrium of Aumann \cite{Aumann1959} assuming no utility transfer between players, and the transferable game notion of the core (see \cite{Maschler1992} for a survey). %The strong Nash equilibrium was introduced to model outcomes in games when players cooperate.
An outcome is a strong Nash equilibrium if it is stable subject to coalitional deviations, meaning that no group of players can jointly deviate to improve the solution for every member of the coalition. Allowing utility transfers between the players, leads to a more demanding form of equilibrium, a solution is unstable in this sense, if there is a possible joint deviation for a group that improves the total utility of a group, even if this is not improving the utility of every single player.

The strong price of anarchy was introduced by Andelman et al \cite{Andelman2009} and measures the quality degradation of strong Nash equilibria in games. One of the most compelling examples is the cost-sharing games. Anshelevich at al \cite{Anshelevich2004} showed that the price of anarchy in this class of games with $n$ players can be as bad as $n$, but showed a tight $H_n=O(\log n)$ bound on the price of stability, the quality loss in the best Nash equilibria compared to the socially optimal solution. While the worst Nash equilibria seems too pessimistic a prediction for the outcome, the best Nash equilibria is potentially too optimistic: while significant cooperation is needed to identify and reach this solution, the stability concept used is that of Nash equilibria, assuming that only individual players can deviate, and not groups. Epstein et al \cite{Epstein2009} showed an $H_n$ bound on the the strong price of anarchy, matching the price of stability bound.

The strong Nash equilibria (and the strong price of anarchy) is a compelling outcome prediction in games when strong equilibria exist. However, strong Nash equilibria are not guaranteed to exist, and do not exist in even small and simple cost-sharing games \cite{Epstein2009}. Under the transferable utility definition of coalitional stability, a stable outcome is even less likely to exist. Such stable outcome is automatically socially optimal (or else the group of all players can deviate), requiring at least the price of stability to be 1.  In fact, already Nash equilibria as a prediction of a game outcome has issues. While Nash equilibria are guaranteed to exist, they may not be unique, and natural ``game play'' tends not to converge to Nash equilibria; rather, repeated best response style deviations tend to lead to cycling between outcomes. Hence, it is important to understand the efficiency of games without reaching stable outcomes. We need to find approaches to quantifying the efficiency of coalitionally stable equilibrium outcomes in a manner that they would directly extend with very small degradation even to out-of-equilibrium cooperative dynamic solution concepts.

Roughgarden \cite{Roughgarden2009} introduced a framework, called smooth games, encompassing most price of anarchy bounds, and showing that bounds proved by his smoothness framework automatically extend also to coarse correlated equilibria, which are outcomes of no-regret learning by each player \cite{Blum}.  Extending the price of anarchy results to no-regret outcomes is appealing as it is a natural model of player behavior, and no-regret can be achieved via simple strategies.

The goal of our paper is to initiate a similar study of outcomes of dynamic cooperative play. We
propose a smoothness framework that captures efficiency in most well-established %to \etedit{a form of}
cooperative equilibrium solution concepts such as the strong Nash equilibrium and randomized versions of it, and the efficiency guarantees that it implies directly extend with small loss to a form of out-of-equilibrium cooperative dynamics. The solution concepts of Nash equilibria and as well as the learning outcome of coarse correlated equilibrium, is based on the assumption that every player acts independently, in a solely self-interested fashion. We study outcomes of a form of cooperative play, and our goal is to offer conclusions about the efficiency loss of cooperative play that hold even without reaching an equilibrium, including games that don't have strong Nash equilibria.

\paragraph{Our Results.} We propose a general framework for quantifying the quality of strong Nash equilibria by introducing the notion of coalitional smoothness. We show how coalitional smoothness captures existing results on network design games, we give new results on the strong price of anarchy in utility games, and show that coalitional smoothness in such games implies high social welfare at coalitional sink equilibria, which we define as the out-of-equilibrium myopic behavior as defined by a natural coalitional version of best-response dynamics.

\begin{itemize}
\item
 We define the notion of a {\em $(\lambda,\mu)$-coalitionally smooth games} and show that the strong price of anarchy of a $(\lambda,\mu)$-coalitionally smooth game is bounded by $\lambda/(1+\mu)$ in utility games and  $\lambda/(1-\mu)$ in cost minimization games.
\item We show that the cost-charing games of \cite{Epstein2009} as well as network contribution games \cite{Anshelevich2010} studied in the literature are coalitionally smooth.
\item We show that in any monotone utility-maximization game, if each player's utility is at least his marginal contribution to the welfare then the strong price of anarchy is at most $2$, while the price of anarchy in this class of games can be as high as $n$. This result complements the results of Vetta and Goemans et al. \cite{Vetta2002,Goemans2005} who studied the price of anarchy of utility-maximization games that have submodular social welfare function.
\item In potential games, such as the cost-sharing game of \cite{Epstein2009}, the potential minimizer is a Nash equilibrium of high quality. This equilibrium is typically used to bound the price of stability by showing that the social welfare function is similar to the potential function, namely $\lambda \cdot SW(s) \leq  \Phi(s)\leq \mu \cdot SW(s)$, implying a bound of $\lambda/\mu$ on the price of anarchy. We show that in utility games this condition also implies that the game is  $(\lambda,\mu)$-coalitionally smooth implying a  $\lambda/(1+\mu)$ bound on the strong price of anarchy, and give conditions for a similar bound in cost-minimization potential games extending the work of \cite{Epstein2009}.
\item Strong price of anarchy bounds via coalitional smoothness also extend to the notions of strong correlated equilibria of Moreno and Wooders \cite{Moreno1996} and strong coarse correlated equilibria of Rozenfeld et al. \cite{Rozenfeld06strongand}, which correspond to randomized outcomes where no group of players $C$ has a joint distribution of strategies $\tilde{D}_C$ that each member of the group has regret for. Though there exist games with no strong Nash, that admit such randomized strong equilibria, unfortunately, there is no simple game play that guarantees this coalitional no-regret property, and in fact, strong coarse correlated equilibrium may not exist in some games.
\item We define a natural {\em coalitional best response dynamic} and the corresponding {\em coalitional sink equilibria}, the analog of the notion of sink equilibria introduced by Goemans et al. \cite{Goemans2005} for coalitional dynamics. While sink equilibria correspond to steady state behavior of the Markov chain defined by iteratively doing random unilateral best respond dynamics, coalitional sink equilibria are the steady state under our coalitional best response dynamic. We do not explicitly model how players chose to transfer utility to each other. However, our dynamic assumes that when a group cooperates, then they can also transfer utility, and hence will choose to optimize the total utility of all group members. We show that in $(\lambda,\mu)$-coalitionally smooth utility games the social welfare of any coalitional sink equilibrium is at least a $\frac{1}{H_n}\frac{\lambda}{1+\mu}$ fraction of the optimal; extending our analysis of outcomes of coalitional play to games when strong Nash equilibria does not exist.
\end{itemize} 
\subsection*{Related Work}
\label{sec:related}
% !TEX root = coalitional-smoothness.tex

The study of efficiency of worst-case Nash equilibria via the price of anarchy was initiated by Koutsoupias and Papadimitriou \cite{Koutsoupias1999}, and has
triggered a large body of work. Roughgarden \cite{Roughgarden2009} introduced a canonical way of analyzing the price of anarchy by proposing the notion
of a $(\lambda,\mu)$-smooth game and showing that most efficiency proofs can be cast as showing that the game is smooth. Most importantly, \cite{Roughgarden2009} showed that any efficiency proven via smoothness arguments directly extends to outcomes of no-regret learning behavior. Recently, similar frameworks
have been proposed for games of incomplete information \cite{Roughgarden2012,Syrgkanis2012,Syrgkanis2013} and games with continuous strategy spaces \cite{Roughgarden2010}. However, these frameworks do not take into account coalitional robustness and no canonical way of showing efficiency bounds for coalitional solution concepts existed prior to our work.

The most well-established coalitionally robust solution concept is that of the strong Nash equilibrium introduced by Aumann et al. \cite{Aumann1959}. The study of the efficiency of the worst strong Nash equilibrium (strong price of anarchy) was introduced in~\cite{Andelman2009}, and follow-up research mostly focused on specific cost minimization games such as network design games~\cite{Anshelevich2004,Albers2008,Epstein2009}. Our coalitional smoothness framework captures some of the results in this literature and gives a generic condition under which the strong price of anarchy is bounded.

For utility maximization games Vetta \cite{Vetta2002} defined the class of valid-utility games, which are utility maximization games with a monotone and submodular welfare function and where each player's utility is at least his marginal contribution to the welfare. Vetta \cite{Vetta2002} showed that every Nash equilibrium of a valid utility game achieves at least half of the optimal welfare. Later these games were analyzed from the perspective of best response dynamics by Goemans et al.  \cite{Goemans2005}, who introduced the notion of a sink equilibrium (i.e. steady state distribution of the Markov chain defined
by best-response dynamics) and showed that for a subclass of valid-utility games the half approximation is
achieved after polynomially many rounds, while for the general class, the sink equilibria can have an efficiency that degrades linearly with the number of players.
In this paper, we show that even without the assumption of submodularity every monotone utility maximization game  that satisfies the marginal contribution condition has good strong price of anarchy. Additionally, we define a coalitional version of sink equilibria of Goemans et al. \cite{Goemans2005} and show that for any coalitionally smooth game the efficiency at these out-of-equilibrium dynamics is only a logarithmic in the number of players degradation of the implied
strong price of anarchy bound.

The efficiency of coalitionally robust soluction concepts in the context of utility-maximization games was also studied by Anshelevich et al \cite{Anshelevich2010} for a class of contribution games in networks, where pairwise-stable outcomes where analyzed. Most of our theorems imply social welfare bounds for strong
Nash equilibria of network contribution games, that hold under much more general assumptions than the ones considered in \cite{Anshelevich2010}.

The existence of strong Nash equilibria was examined by both game theorists and computer scientists (see e.g.~\cite{sne_existence,Harks2009,Epstein2009}). Rozenfeld and Tennenholtz~\cite{Rozenfeld06strongand} show that in singleton congestion games with increasing resource value functions there always exists a strong Nash equilibrium, while Holzman et al.~\cite{Holzman1997} show that  for decreasing function the set of pure Nash equilibria, which is non-empty, coincides with the set of strong Nash equilibria.
%Randomized variations of strong equilibria where introduced by Moreno and Wooders \cite{Moreno1996} who introduce  strong coarse correlated equilibria, and by Rozenfeld and Tennenholtz who introduced strong coarse correlated equilibria.

\section{Coalitional Smoothness}\label{sec:coalitional-smoothness}
% !TEX root = coalitional-smoothness.tex
In this section we introduce the notion of coalitional smoothness and show that it captures the essence of efficiency guarantees of strong Nash equilibria in several games studied in the past, such as network cost sharing games \cite{Anshelevich2004,Epstein2009} as well as in new classes of games that we give, which generalize the well-studied valid-utility games of Vetta \cite{Vetta2002} to general utility maximization games dropping the assumption of submodularity.

For ease of presentation we will present the definition of coalitional smoothness for utility maximization games rather than cost minimization, but the definitions naturally extend to analogous ones for cost minimization. We will consider a standard normal form game among $n$ players. Each player $i$
has a strategy space $S_i$ and a utility $u_i:S_1\times\ldots \times S_n\rightarrow \R_+$. For a subset of players $C\subseteq [n]$ we will denote
with $S_C=(S_i)_{i\in C}$ the joint strategy space, with $s_C\in S_C$ a joint strategy profile and with $\Delta(S_C)$ the space of distributions over strategy profiles. We are interested in quantifying the efficiency of coalitional solution concepts with respect to the social welfare, which is defined as the sum of
all player utilities: $SW(s)= \sum_{i\in [n]} u_i(s)$. For convenience, we will denote with $\opt$ the maximal social welfare (resp. minimum social cost) achieved among all possible strategy profiles and we will try to upper bound the \emph{price of anarchy}, which is the ratio of the optimal social welfare over the social welfare at any equilibrium in the class of solution concepts that we study (e.g. strong price of anarchy for the case of strong Nash equilibria), or equivalently to lower bound the fraction of the optimal welfare that every equilibrium in the class achieves.

The intuition behind coalitional smoothness is that it requires from the game to admit a good strategy profile such that if enough players coalitionally deviate to this strategy from any state with low social welfare then they achieve a good fraction of the optimal social welfare. In mechanisms this condition can be alternately phrased as requiring that the group of players can achieve a good fraction of the optimal social welfare by not paying much more than the current prices \cite{Syrgkanis2013}. Specifically, it imposes that if we order the players arbitrarily and consider only the coalitional deviations of all the suffixes of this order then the total utilities of the first player in each of the suffixes, after the coalitional deviation of the suffix, is at least a $\lambda$ fraction of the optimal social welfare or else $\mu$ times the current social welfare is at least a $\lambda$ fraction of the optimal.

\begin{defn}[Coalitional Smoothness]
A utility maximization game is $(\lambda,\mu)$-coalitionally smooth if there exists a strategy profile $s^*$ such that for any
strategy profile $s$ and for any permutation $\pi$ of the players:
\begin{equation}
\sum_{i=1}^n u_i(s_{N_{\pi(i)}}^*,s_{-N_{\pi(i)}})\geq \lambda\cdot \opt - \mu\cdot SW(s)
\end{equation}
where $N_{\pi(i)}=\{j\in [n]: \pi(j)\geq \pi(i)\}$ is the set of all players preceding $i$ in the permutation and  $(s_{N_t},s_{-N_t})$
is the strategy profile where all players in $i\in N_t$ play $s_i^*$ and all other players play $s$.
\footnote{In the case of cost minimization games we would require:
$\sum_{i=1}^n c_i(s_{N_{\pi(i)}}^*,s_{-N_{\pi(i)}})\leq \lambda\cdot SC(s^*) + \mu\cdot SC(s)$}
\end{defn}

We now formally define the notion of a strong Nash equilibrium introduced by Aumann \cite{Aumann1959} and show that coalitional smoothness implies high efficiency at every strong Nash equilibrium of a game.
\begin{defn}[Strong Nash Equilibrium] A strategy profile $s$ is a strong Nash equilibrium if for any coalition $C\subseteq [n]$ and for any
coalitional strategy $s_C\in S_C$, there exists a player $i\in C$ such that:
$u_i(s) \geq u_i(s_C,s_{-C})$.
\end{defn}

\begin{theorem}\label{thm:strong-poa}
If a game is $(\lambda,\mu)$-coalitionally smooth for some $\lambda,\mu\geq 0$ then every strong Nash equilibrium has social welfare at least $\frac{\lambda}{1+\mu}$ of the optimal. \footnote{In cost-minimization games $(\lambda,\mu)$-coalitional smoothness for $\lambda\geq 0$ and $\mu\leq 1$ would imply that the social cost at every strong Nash equilibrium is at most $\frac{\lambda}{1-\mu}$ of the minimum cost.}
\end{theorem}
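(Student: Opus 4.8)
The plan is to fix an arbitrary strong Nash equilibrium $s$ together with the witness profile $s^*$ guaranteed by coalitional smoothness, and then sum the smoothness inequality against a permutation $\pi$ chosen so that the deviations it selects are exactly ones that $s$, being a strong equilibrium, resists. Concretely, I will aim to produce a $\pi$ with the property that for every player $i$, when the whole suffix $N_{\pi(i)}$ migrates to $s^*$, player $i$ (the first element of that suffix) does not strictly gain, i.e. $u_i(s)\geq u_i(s^*_{N_{\pi(i)}},s_{-N_{\pi(i)}})$. Granting such a $\pi$, the theorem is immediate: chaining the per-player inequalities with the coalitional smoothness inequality gives
\[
SW(s)=\sum_{i=1}^n u_i(s)\ \geq\ \sum_{i=1}^n u_i(s^*_{N_{\pi(i)}},s_{-N_{\pi(i)}})\ \geq\ \lambda\cdot\opt-\mu\cdot SW(s),
\]
and rearranging yields $(1+\mu)SW(s)\geq \lambda\cdot\opt$, that is $SW(s)\geq \frac{\lambda}{1+\mu}\opt$.

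The heart of the argument, and the step I expect to be the main obstacle, is the construction of this permutation, which is precisely where the strong Nash hypothesis enters. I would build $\pi$ greedily from position $1$ upward while peeling players off. Begin with the coalition $C_1=[n]$ and consider its joint deviation to $s^*_{C_1}=s^*$. Since $s$ is a strong Nash equilibrium, this deviation cannot strictly improve every member, so there is a player $i_1\in C_1$ with $u_{i_1}(s)\geq u_{i_1}(s^*)$; set $\pi(i_1)=1$. Inductively, having selected $i_1,\dots,i_{k-1}$, put $C_k=[n]\setminus\{i_1,\dots,i_{k-1}\}$ and apply the strong Nash condition to the deviation of $C_k$ to $s^*_{C_k}$ (with the already-peeled players still at $s$); this produces a player $i_k\in C_k$ with $u_{i_k}(s)\geq u_{i_k}(s^*_{C_k},s_{-C_k})$, and I set $\pi(i_k)=k$.

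The observation that closes the loop is that the coalition $C_k$ arising at step $k$ is exactly the suffix $N_k$ of the finished permutation: by construction $C_k=\{i_k,i_{k+1},\dots,i_n\}=\{j:\pi(j)\geq k\}=N_k$, and the chosen player $i_k$ sits at position $k$, hence is the first element of that suffix. Therefore, writing $i=i_k$, we have $N_{\pi(i)}=N_k=C_k$, and the inequality $u_i(s)\geq u_i(s^*_{N_{\pi(i)}},s_{-N_{\pi(i)}})$ is precisely the one delivered by strong Nash at step $k$; summing over $k=1,\dots,n$ reproduces the per-player bounds used above. The only points requiring care are that the strong Nash condition is read in the ``no member strictly gains'' direction, so the selected $i_k$ satisfies the weak inequality pointing the right way, and that each deviating coalition is a genuine suffix, both of which are guaranteed by the peeling order. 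The cost-minimization statement in the footnote follows by the identical construction with the inequalities reversed, where strong Nash yields $c_{i_k}(s)\leq c_{i_k}(s^*_{C_k},s_{-C_k})$ and the hypothesis $\mu\leq 1$ is used to rearrange $\sum_i c_i(s)\leq \lambda\cdot SC(s^*)+\mu\cdot SC(s)$ into $SC(s)\leq\frac{\lambda}{1-\mu}SC(s^*)$.
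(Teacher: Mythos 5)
Your proposal is correct and takes essentially the same route as the paper: the paper constructs the identical permutation by repeatedly invoking the strong Nash condition on the suffix coalitions $\{i,\ldots,n\}$, peeling off one blocking player at each step, and then sums these per-player inequalities against the coalitional smoothness inequality to get $SW(s)\geq \lambda\cdot\opt-\mu\cdot SW(s)$. If anything, your write-up is slightly more careful than the paper's, since you correctly distinguish the smoothness witness profile $s^*$ from the optimal profile (the paper conflates the two) and you spell out the inductive peeling argument explicitly.
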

\begin{proof}
Let $s$ be strong Nash equilibrium strategy profile and let $s^*$ be the optimal strategy profile. If all players coalitionally deviate to $s^*$ then, by the definition of a strong Nash equilibrium, there is a player $i$ who is blocking the deviation, i.e. $u_i(s)\geq u_i(s^*)$. Without loss of generality, reorder the players such that this is player $1$. Similarly, if players $\{2,\ldots,n\}$ deviate to playing their strategy in $x^*$ then there exists some player, obviously different than $1$ who is blocking the deviation. Without loss of generality, by reordering we can assume that this player is $2$. Using similar reasoning we can reorder the players such that if players $\{i,\ldots,n\}$ deviate to their strategy in the optimal strategy profile $x^*$ then player $i$ is the one blocking the deviation. That is player $i$'s utility at the strong Nash equilibrium is at least his utility in the deviating strategy profile.
 Thus under this order $\forall k\in N$: $u_i(x)\geq u_i(s_{N_k}^*,s_{-N_k})$. Summing over all players and using the coalitional
smoothness property for the above order we get the result:
\begin{align*}
\textstyle{SW(s) = \sum_{i=1}^N u_i(s) \geq \sum_{i=1}^N u_i(s_{N_i}^*,s_{-N_i})\geq \lambda SW(s^*) - \mu SW(s)}
\end{align*}
\end{proof}

Similar to smoothness, coalitional smoothness also implies efficiency bounds even for randomized coalition-proof solution concepts. Adapting randomized solution
concepts such as correlated equilibria so as to make them robust to coalitional deviations is not as straightforward as in the case of unilateral stability. This is mainly due to information
considerations. One well-studied such concept is that of strong correlated equilibria of Moreno and Wooders \cite{Moreno1996}.

\begin{defn}[Strong Correlated Equilibrium]
A distribution $D\in \Delta(S)$ over strategy profiles is a strong correlated equilibrium, if for any coalition $C\subseteq [n]$ and for any mapping $\tilde{D}_C: S_C\rightarrow \Delta(S_C)$
there exist a player $i\in C$ such that:
\begin{equation}
\E_{s\sim D}[u_i(s)]\geq \E_{s\sim D}\E_{\tilde{s}_C\sim \tilde{D}_C(s_C)}[u_i(\tilde{s}_C,s_{-C})]
\end{equation}
\end{defn}
This notion assumes that coalitions form at the ex-ante stage, before players receive their recommendations of which strategy to play. The deviation is a conditional plan on which
distribution players will deviate too, conditional on the recommendation that they get. Thus implicitly it is assumed that if players commit to a coalition ex-ante, then after receiving
their recommendations on which strategy to play, they share it publicly among the players in the coalition and decide on a joint deviation. A strong correlated equilibrium asks that
for any coalition and for any deviating plan there exists some player in the coalition that doesn't benefit from the deviation. Using the same approach as in Theorem \ref{thm:strong-poa}
and the fact that the deviating strategy designated by the coalitional smoothness property is independent of the same for any strategy profile we get the same efficiency guarantee for strong correlated equilibria too. Unlike Correlated Equilibria, Strong Correlated Equilibria don't always exist even in cost-sharing games, as seen by the example of Epstein at al \cite{Epstein2008} modeling the prisoner's dilemma as a cost-sharing game. However, there are games that admit no Strong Nash Equilibrium but have a Strong Correlated Equilibrium (see e.g. Moreno and Wooders \cite{Moreno1996} for such an example of a three-player matching pennies game).

Similarly, one can define the coalitional equivalent of coarse correlated equilibria, which was analyzed by Rozenfeld et al.~\cite{Rozenfeld06strongand}, under which the coalitional deviation cannot depend on the recommended actions of the players, but is a fixed coalitional strategy.
\begin{defn}[Strong Coarse Correlated Equilibrium]
A distribution $D\in \Delta(S)$ over strategy profiles is a strong coarse correlated equilibrium, if for any coalition $C\subseteq [n]$ and for any distribution $\tilde{D}_C\in \Delta(S_C)$
there exist a player $i\in C$ such that:
\begin{equation}
\E_{s\sim D}[u_i(s)]\geq \E_{s\sim D,\tilde{s}_C\sim \tilde{D}_C}[u_i(\tilde{s}_C,s_{-C})]
\end{equation}
\end{defn}

\begin{theorem}If a utility game is $(\lambda,\mu)$-coalitionally smooth then every strong coarse correlated equilibrium has expected social welfare at least $\frac{\lambda}{1+\mu}$ of the optimal.\footnote{In a cost minimization game is $(\lambda,\mu)$-coalitionally smooth then every strong coarse correlated equilibrium has expected cost at most $\frac{\lambda}{1-\mu}$ of the optimal.}
\end{theorem}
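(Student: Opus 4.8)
The plan is to mimic the proof of Theorem~\ref{thm:strong-poa}, replacing the deterministic blocking argument with its in-expectation analogue, with the crucial twist that the permutation must be built once and held independent of the realized strategy profile. First I would fix the witness profile $s^*$ guaranteed by $(\lambda,\mu)$-coalitional smoothness, and let $D$ be an arbitrary strong coarse correlated equilibrium. For any coalition $C$, the deviation I would feed into the definition of a strong coarse correlated equilibrium is the point mass $\tilde{D}_C$ on $s^*_C$; since $\tilde{D}_C$ is a fixed distribution that does not depend on the recommendation, this is a legal deviation in the sense of the definition.

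Next I would construct the permutation greedily, exactly as in Theorem~\ref{thm:strong-poa}, but working with expected utilities. Starting from $C=[n]$, the strong coarse correlated equilibrium property applied to the deviation $s^*_{[n]}$ yields a player $i_1$ whose expected utility under $D$ is not improved by the deviation; I place $i_1$ at position $1$. Removing it and repeating with the suffix coalition $C=[n]\setminus\{i_1\}$ produces the player at position $2$, and so on, each new blocking player lying in the remaining coalition and hence automatically distinct from the previous ones. The outcome is a single permutation $\pi$ such that for every player $i$,
$$\E_{s\sim D}[u_i(s)]\geq \E_{s\sim D}[u_i(s_{N_{\pi(i)}}^*,s_{-N_{\pi(i)}})],$$
where $N_{\pi(i)}$ is precisely the suffix coalition that was deviated at the step placing $i$.

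I would then sum these inequalities over all players to obtain $\E_{s\sim D}[SW(s)]\geq \E_{s\sim D}\bigl[\sum_{i=1}^n u_i(s_{N_{\pi(i)}}^*,s_{-N_{\pi(i)}})\bigr]$. Because $\pi$ is now a single fixed permutation, I can invoke the coalitional smoothness inequality pointwise for each realized $s$ with this same $\pi$ and then take expectation over $D$, giving $\E_{s\sim D}[\sum_i u_i(s_{N_{\pi(i)}}^*,s_{-N_{\pi(i)}})]\geq \lambda\cdot\opt-\mu\cdot\E_{s\sim D}[SW(s)]$. Chaining the two bounds and rearranging yields $(1+\mu)\,\E_{s\sim D}[SW(s)]\geq \lambda\cdot\opt$, which is the claimed $\frac{\lambda}{1+\mu}$ bound.

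The main obstacle, and the only place the argument differs meaningfully from Theorem~\ref{thm:strong-poa}, is the second step: I must guarantee that the permutation is chosen \emph{before} and independently of the random draw $s\sim D$, since this is exactly what legitimizes applying the deterministic smoothness inequality for each realization under a common $\pi$ and then integrating. This works here because a strong coarse correlated equilibrium only protects against fixed, recommendation-independent coalitional distributions, so the blocking player at each greedy step is selected purely from an expected-utility comparison; the resulting $\pi$ is therefore a deterministic function of $D$ and $s^*$ alone, and no per-realization reordering is ever needed.
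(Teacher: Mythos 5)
Your proof is correct and follows exactly the route the paper intends: it replays the blocking-player permutation construction of Theorem~\ref{thm:strong-poa} in expectation, using the fact that the smoothness deviation $s^*_C$ is a fixed, recommendation-independent strategy (hence a legal point-mass $\tilde{D}_C$), and then applies the deterministic smoothness inequality pointwise under the single fixed permutation before integrating over $D$. This is precisely the argument the paper sketches when it says the same approach as Theorem~\ref{thm:strong-poa} works because the designated deviating strategy is independent of the realized strategy profile.
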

Strong coarse correlated equilibria are not a strict subset of strong Nash equilibria as defined above, since the deviating constraints that strong correlated
equilibria need to satisfy are a superset of those of strong Nash equilibria. However, strong coarse correlated equilibria allow for distributions
over strategy profiles. We could make strong coarse correlated equilibria a subset if we slightly modified the deviation constraints to only pure deviations. Coalitional smoothness would imply bounds for this larger set of equilibria too.

Strong coarse correlated equilibria are related to no-regret repeated game playing: they correspond to limit sequences of game playing
under which no coalition regrets not having formed a coalition $C$ and deviating coalitionally to some fixed strategy $s_C$ (in the sense of every player in the coalition being better off). However, unlike coarse correlated equilibria (e.g. hedge algorithm), there are no known methods that achieve such coalitional stability in the limit. In fact, there cannot be such algorithms for arbitrary games, since there are games that don't admit strong coarse correlated equilibria, such as the prisoner's dilemma. This observation
%renders
highlights the study of other types of cyclic dynamics that would lead to good welfare in the limit even without reaching some equilibrium notion and which
would be valid for any game.
% TODO: YB - what do you mean by the previous comment?
In Section
\ref{sec:best-response} we give the first such out-of-equilibrium efficiency guarantees that take into account coalitional
deviations, by introducing a natural version of myopic coalitional best-response dynamics.

\subsection{Non-Submodular Monotone Utility Games}
Consider a utility maximization game in which every player has an $s_i^{out}$ strategy, corresponding to the player not entering the game. Further assume that the game is monotone with respect to participation, i.e. no player can decrease the social welfare by entering the game: $\forall i\in [n], \forall s: SW(s)\geq SW(s_i^{out},s_{-i})$. We show that the coalitional smoothness of such a game is captured exactly by the proportion of the marginal contribution to the social welfare that a player is guaranteed to get as utility.

\begin{theorem}\label{thm:marginal-contribution}
Any monotone utility maximization game is guaranteed to be $\left(\gamma,\gamma\right)$-coalitionally smooth, if 
%, where 
each player is guaranteed at least a $\gamma$ fraction of his marginal contribution to the social welfare:
\begin{equation}
\textstyle{\forall s: u_i(s)\geq \gamma \left(SW(s)-SW(s_i^{out},s_{-i})\right)}
\end{equation}
%is $\left(\gamma,\gamma\right)$-coalitionally smooth.
\end{theorem}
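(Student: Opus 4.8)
The plan is to take $s^*$ to be an optimal profile, so that $SW(s^*)=\opt$, and to verify the defining inequality for an arbitrary permutation $\pi$; since the left-hand sum ranges over all players, I would relabel the players so that $\pi$ is the identity, in which case $N_t=\{t,t+1,\ldots,n\}$ is simply the suffix starting at position $t$. The same fixed $s^*$ (the optimum) will work uniformly over all $s$ and all $\pi$, as required by the definition. I would then introduce the intermediate profiles $\sigma_t=(s^*_{N_t},s_{-N_t})$, in which players $t,\ldots,n$ play their optimal strategy and players $1,\ldots,t-1$ play $s$; note that $\sigma_1=s^*$ and $\sigma_{n+1}=s$, so the family $\{\sigma_t\}$ interpolates between the optimum and $s$ one coordinate at a time. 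The quantity to lower-bound is exactly $\sum_{t=1}^n u_t(\sigma_t)$.

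The first substantive step is to apply the marginal-contribution hypothesis to player $t$ in the profile $\sigma_t$, which gives $u_t(\sigma_t)\geq \gamma\bigl(SW(\sigma_t)-SW(s_t^{out},(\sigma_t)_{-t})\bigr)$. The key observation, which is the heart of the argument, is that dropping player $t$ out of $\sigma_t$ produces a profile comparable to $\sigma_{t+1}$: the two profiles $(s_t^{out},(\sigma_t)_{-t})$ and $\sigma_{t+1}$ differ only in player $t$'s strategy (the former plays $s_t^{out}$, the latter plays $s_t$), while all other coordinates agree. Monotonicity with respect to participation then yields $SW(\sigma_{t+1})\geq SW(s_t^{out},(\sigma_t)_{-t})$, and hence $u_t(\sigma_t)\geq \gamma\bigl(SW(\sigma_t)-SW(\sigma_{t+1})\bigr)$.

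Summing this inequality over $t=1,\ldots,n$ makes the right-hand side telescope, since $\sum_{t=1}^n\bigl(SW(\sigma_t)-SW(\sigma_{t+1})\bigr)=SW(\sigma_1)-SW(\sigma_{n+1})=SW(s^*)-SW(s)=\opt-SW(s)$. This gives $\sum_{t=1}^n u_t(\sigma_t)\geq \gamma\,\opt-\gamma\,SW(s)$, which is precisely the $(\gamma,\gamma)$-coalitional-smoothness inequality, completing the argument.

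I anticipate the only real obstacle is correctly identifying the telescoping structure: specifically, recognizing that the ``out'' deviation appearing in the marginal-contribution bound for $\sigma_t$ should be compared against $SW(\sigma_{t+1})$ rather than $SW(\sigma_t)$, and then checking that $(s_t^{out},(\sigma_t)_{-t})$ and $\sigma_{t+1}$ indeed agree off coordinate $t$, so that a single application of monotonicity is legitimate. Everything else — the choice $s^*=\text{optimum}$, the reduction to the identity permutation, and the collapse of the telescoping sum — is routine once this alignment between consecutive suffix profiles is set up.
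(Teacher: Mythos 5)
Your proposal is correct and follows essentially the same argument as the paper's proof: choose $s^*$ to be the welfare-optimal profile, apply the marginal-contribution hypothesis to each player $t$ in the suffix profile $(s^*_{N_t},s_{-N_t})$, use monotonicity to replace the ``drop-out'' welfare term $SW(s_t^{out},(\sigma_t)_{-t})$ by $SW(\sigma_{t+1})$, and telescope. Your explicit verification that $(s_t^{out},(\sigma_t)_{-t})$ and $\sigma_{t+1}$ agree off coordinate $t$ is exactly the alignment the paper uses (stated there as ``the social welfare can only increase when a player enters the game with any strategy''), so there is nothing to add.
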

\begin{proof}
Consider an arbitrary order of the players and let $s^*$ be the strategy profile that maximizes the social welfare.
%. We show that:
%\begin{align*}
%\textstyle{\sum_{i=1}^n u_i(s_{N_i}^*,s_{-N_i})\geq \gamma \cdot \opt - \gamma\cdot SW(s)}
%\end{align*}
%Let $s^*$ be the strategy profile that maximizes social welfare.
By the marginal contribution property we have:
\begin{align*}
\textstyle{\sum_{i=1}^n u_i(s_{N_i}^*,s_{-N_i}) \geq \gamma\cdot \sum_{i=1}^n \left(SW\left(s_{N_i}^*,s_{-N_i}\right)-SW\left(s_i^{out},s_{N_{i+1}^*},s_{-N_i}\right)\right)}
\end{align*}
In addition, by the monotonicity assumption the social welfare can only increase when a player enters the game with any strategy:
\begin{align*}
\textstyle{SW(s_i^{out},s_{N_{i+1}}^*,s_{-N_i})\leq SW(s_k,s_{N_{i+1}}^*,s_{-N_i})=SW(s_{N_{i+1}}^*,s_{-N_{i+1}})}
\end{align*}
Combining the above inequalities we get a telescoping sum that yields the desired property:
\begin{align*}
\textstyle{\sum_{i=1}^n u_i(s_{N_i}^*,s_{-N_i})} \geq~& \textstyle{\gamma\cdot \sum_{i=1}^n \left(SW(s_{N_i}^*,s_{-N_i})-SW(s_{N_{i+1}^*},s_{-N_{i+1}})\right)}\\
\geq~& \textstyle{\gamma\cdot SW(s^*) - \gamma\cdot SW(s) = \gamma \cdot \opt - \gamma\cdot  SW(s)}
\end{align*}
Which is exactly the $(\gamma,\gamma)$-coalitional smoothness property we wanted.
\end{proof}

This latter result complements Vetta's results on valid-utility games. A valid-utility game is a monotone utility-maximization game with the
extra constraint that the social welfare is a submodular function (if viewed as a set  function on strategies). As presented by
Roughgarden \cite{Roughgarden2009}, Vetta showed that in any monotone utility-maximization game with a submodular welfare function,
if each player receives a $\gamma$ fraction of their marginal contribution to the welfare, then the game is $(\gamma,\gamma)$-smooth
implying that every Nash equilibrium achieves a $\frac{\gamma}{\gamma+1}$ fraction of the optimal welfare. In the absence of submodularity
there are easy examples where the worst Nash equilibrium doesn't achieve any constant fraction of the optimal welfare, despite
satisfying the marginal contribution condition. However, our result shows that even in the absence of submodularity every such game will
be $(\gamma,\gamma)$-coalitionally smooth, implying that every strong Nash equilibrium will achieve a $\frac{\gamma}{\gamma+1}$ fraction
of the welfare.

It is important to note that the approximate marginal contribution condition and the submodularity condition are very orthogonal ones. For instance, it is
possible that a game satisfies the approximate marginal contribution condition for some constant, but is not submodular or even approximately submodular
under existing definitions of approximate submodularity. In Appendix \ref{sec:welfare-sharing}, we give a class of welfare sharing games,
where our efficiency theorem applies to give constant bounds on the string price of anarchy, whilst the price of anarchy is unbounded
due to the non-submodularity of the social welfare.

\subsection{Network Cost-Sharing Games.}
In this section we analyze the well-studied class of cost sharing games \cite{Anshelevich2004}, using the coalitional smoothness property.
The game is defined by a set of resources $R$ each associated with a cost $c_r$. Each player's strategy space $S_i$ is  a set
of subsets of $R$. The cost of each resource is shared equally among all players that use the resource and a players total cost
is the sum of his cost-shares on the resources that he uses. If we denote with $n_r(s)$ the number of players using resource
$r$ under strategy profile $s$, then:
$c_i(s) = \sum_{r\in s_i} \frac{c_r}{n_r(s)}$.

Epstein et al \cite{Epstein2009} showed that every  strong Nash equilibrium of the above class of games has social cost at most
$H_n$ times the optimal, where $H_n$ is the $n$-th harmonic number. Here we re-interpret that result as showing that network cost-sharing games are $(H_n,0)$-coalitionally smooth. In the next section we show that the analysis of \cite{Epstein2009} can be applied to a more broad class of potential games, showing a strong connection between the price of stability and the string price of anarchy.

\begin{theorem}[Epstein et al.\cite{Epstein2009}]
Cost sharing games are $(H_n,0)$-coalitionally smooth. (App. \ref{sec:cost-sharing-appendix}).
\end{theorem}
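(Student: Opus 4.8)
The plan is to verify directly the cost-minimization form of coalitional smoothness from the footnote to the definition, namely that for the optimal (cost-minimizing) profile $s^*$, for every profile $s$, and every permutation $\pi$,
\[
\textstyle{\sum_{i=1}^n c_i(s_{N_{\pi(i)}}^*,s_{-N_{\pi(i)}}) \leq H_n\cdot SC(s^*),}
\]
which is exactly $(H_n,0)$-coalitional smoothness since $\mu=0$ makes the term in $SC(s)$ vanish. The natural move is to fix $s^*$, expand each player's cost over the resources it uses, $c_i(s_{N_{\pi(i)}}^*,s_{-N_{\pi(i)}})=\sum_{r\in s_i^*} c_r/n_r(s_{N_{\pi(i)}}^*,s_{-N_{\pi(i)}})$, and then swap the order of summation so that we sum over resources on the outside. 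For a resource $r$, let $O_r=\{j\in[n]:r\in s_j^*\}$ be the set of players using $r$ in the optimum; then only players in $O_r$ contribute to $r$ on the left-hand side.

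The crucial observation, and the reason the \emph{coalitional} property gives the harmonic bound where ordinary smoothness does not, is that when the whole suffix $N_{\pi(i)}$ deviates to $s^*$ simultaneously, every player $j\in O_r$ with $\pi(j)\geq \pi(i)$ is present on $r$ at once. Hence $n_r(s_{N_{\pi(i)}}^*,s_{-N_{\pi(i)}})\geq |\{j\in O_r:\pi(j)\geq\pi(i)\}|$. For a fixed $r$, I would order the players of $O_r$ by increasing $\pi$-rank; then the highest-ranked user sees a denominator of at least $1$, the next at least $2$, and so on, so that
\[
\textstyle{\sum_{i\in O_r} \frac{c_r}{n_r(s_{N_{\pi(i)}}^*,s_{-N_{\pi(i)}})} \;\leq\; c_r\sum_{\ell=1}^{|O_r|}\frac1\ell \;=\; c_r H_{|O_r|} \;\leq\; c_r H_n.}
\]
Summing this over all resources used in $s^*$ yields $\sum_i c_i(s_{N_{\pi(i)}}^*,s_{-N_{\pi(i)}})\leq H_n\sum_{r:\,n_r(s^*)>0} c_r = H_n\cdot SC(s^*)$, as required. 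Note that the permutation-generality comes for free: the argument only ever uses the relative $\pi$-order of the players within each $O_r$, so no renaming to the natural order is even needed.

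I do not expect a genuine obstacle here; the harmonic telescoping is the standard Anshelevich--Epstein argument, and the rest is bookkeeping. The one place to be careful is the lower bound $n_r(\cdot)\geq|\{j\in O_r:\pi(j)\geq\pi(i)\}|$: it must be stated as an inequality (players outside the suffix may also use $r$ under $s$, which only increases the denominator and helps us), and it is precisely this ``all later optimal users of $r$ are present together'' feature of a suffix deviation that a single-player deviation would fail to guarantee, which is why cost-sharing games are $(H_n,0)$-coalitionally smooth even though they are far from $(H_n,0)$-smooth in the non-coalitional sense.
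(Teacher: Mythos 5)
Your proof is correct, but it takes a genuinely different (and more elementary) route than the paper's. The paper's proof in Appendix~\ref{sec:cost-sharing-appendix} works through Rosenthal's potential: it augments the strategy space with drop-out strategies $s_i^{out}$, uses monotonicity of cost shares to write $c_i(s_{N_i}^*,s_{-N_i})\leq c_i(s_{N_i}^*,s_{-N_i}^{out})=\Phi(s_{N_i}^*,s_{-N_i}^{out})-\Phi(s_{N_{i+1}}^*,s_{-N_{i+1}}^{out})$, telescopes the sum to $\Phi(s^*)-\Phi(s^{out})=\Phi(s^*)$, and finishes with $\Phi(s^*)\leq H_n\cdot SC(s^*)$. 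Your argument never mentions the potential; instead it unrolls the same mathematics into explicit per-resource bookkeeping. The correspondence is exact: your congestion lower bound $n_r(s_{N_{\pi(i)}}^*,s_{-N_{\pi(i)}})\geq \vert\{j\in O_r:\pi(j)\geq\pi(i)\}\vert$ is precisely the paper's cost-share monotonicity step (ignoring the non-deviators can only raise shares), and your per-resource harmonic sums $c_r H_{\vert O_r\vert}$, summed over resources used in $s^*$, are precisely the telescoped potential, since $\Phi(s^*)=\sum_{r:\,n_r(s^*)>0}c_r H_{n_r(s^*)}$. What each version buys: yours is self-contained, makes transparent exactly where the harmonic number comes from, and in fact yields the slightly sharper constant $H_{\max_r n_r(s^*)}$ rather than $H_n$; the paper's potential-function formulation is what lets the same argument generalize to arbitrary potential games with positive externalities (Theorem~\ref{thm:positive-potential}), of which cost-sharing games are then just a special instance. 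Your closing remark about why the bound needs \emph{coalitional} (suffix) deviations rather than unilateral ones is also the right diagnosis and matches the paper's discussion.
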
 

\section{Best Nash Equilibrium vs. Worst Strong Nash Equilibrium}\label{sec:potential-method}
% !TEX root = coalitional-smoothness.tex

strong Nash equilibria are a subset of Nash equilibria, so in games when strong Nash equilibria exists, the strong price of anarchy cannot be better than the price of stability (the quality of best Nash). In this section we show that in potential games these two notions are surprisingly close.
We show that through the lens of coalitional smoothness there is a strong connection between the analysis of the efficiency of
the worst strong Nash equilibria and the dominant analysis of the best Nash equilibria, for the class of potential games. A game admits a potential function
if there exists a common function $\Phi(s)$ for all players, such that a player's difference in utility from a unilateral deviation is equal to
difference in the potential:
\begin{equation}
u_i(s_i',s_{-i})-u_i(s) = \Phi(s_i',s_{-i})-\Phi(s)
\end{equation}

A large amount of recent work in the algorithmic game theory literature has focused on the analysis of the efficiency of the best Nash equilibrium (price of stability). For the case of potential games the dominant way of analysing the price of stability is the Potential Method: suppose that the potential function is
$(\lambda,\mu)$-close to the social welfare, in the sense that
\begin{equation}
\lambda \cdot SW(s) \leq  \Phi(s)\leq \mu \cdot SW(s),
\end{equation}
for some parameters $\lambda,\mu\geq0$. Then the best Nash equilibrium achieves at least $\frac{\lambda}{\mu}$ of the optimal social welfare. The proof relies on the simple fact that the potential maximizer is always a Nash equilibrium and by the $(\lambda,\mu)$ property it's easy to see that the potential maximizer has social welfare that is the above fraction
of the optimal social welfare.

The following theorems show that for such potential games the price of stability is very close to the strong price of anarchy, i.e. the implied quality of the best Nash equilibrium is close to the quality of the worst strong Nash equilibrium.

\begin{theorem}\label{thm:potential_spoa}
In a utility-maximization potential game with non-negative utilities, if the potential is $(\lambda,\mu)$-close to the social welfare then
the game is $(\lambda,\mu)$-coalitionally smooth, implying that every strong Nash equilibrium achieves at least $\frac{\lambda}{1+\mu}$ of the optimal social welfare.
\end{theorem}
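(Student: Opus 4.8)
The plan is to verify the $(\lambda,\mu)$-coalitional smoothness condition directly, taking as witness the welfare-maximizing profile $s^*$ so that $SW(s^*)=\opt$; the claimed strong price of anarchy bound of $\frac{\lambda}{1+\mu}$ then follows immediately from Theorem~\ref{thm:strong-poa}. Fix an arbitrary state $s$ and an arbitrary permutation $\pi$; after relabeling I may assume $\pi$ is the identity, so that $N_i=\{i,i+1,\ldots,n\}$. The idea is to read the left-hand side $\sum_{i=1}^n u_i(s_{N_i}^*,s_{-N_i})$ as a walk through the hybrid profiles obtained by switching players one at a time from $s$ to $s^*$ along this order, and to convert it into a telescoping sum of potential differences.

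Concretely, I would write $t^{(i)}=(s_{N_i}^*,s_{-N_i})$ for the profile in which players $i,\ldots,n$ play their $s^*$-strategies and players $1,\ldots,i-1$ play their $s$-strategies, so that $t^{(1)}=s^*$ and $t^{(n+1)}=s$. Two consecutive profiles $t^{(i)}$ and $t^{(i+1)}$ differ only in the strategy of player $i$, who plays $s_i^*$ in the former and $s_i$ in the latter. Hence the potential-game identity applied to this single unilateral change gives $u_i(t^{(i)})-u_i(t^{(i+1)})=\Phi(t^{(i)})-\Phi(t^{(i+1)})$.

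The crucial step is to discard the trailing utility using non-negativity: since $u_i(t^{(i+1)})\geq 0$, we obtain $u_i(t^{(i)})\geq \Phi(t^{(i)})-\Phi(t^{(i+1)})$. Summing over $i$ the right-hand side telescopes to $\Phi(t^{(1)})-\Phi(t^{(n+1)})=\Phi(s^*)-\Phi(s)$. I then invoke the $(\lambda,\mu)$-closeness of the potential, namely $\Phi(s^*)\geq \lambda\,SW(s^*)=\lambda\,\opt$ together with $\Phi(s)\leq \mu\,SW(s)$, to conclude $\sum_{i=1}^n u_i(t^{(i)})\geq \lambda\,\opt-\mu\,SW(s)$, which is exactly the coalitional smoothness inequality. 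Since the argument is insensitive to the labeling of the players, it holds verbatim for every permutation, establishing the property.

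The proof is short, and the only genuinely load-bearing ingredient is the non-negativity of utilities: it is precisely what licenses dropping the term $u_i(t^{(i+1)})$ and thereby collapses the sum of individual deviating utilities into the clean telescoping difference $\Phi(s^*)-\Phi(s)$. I expect this to be the one point to watch. Without non-negativity the leftover sum $\sum_i u_i(t^{(i+1)})$ need not be controllable, and the reduction to $\Phi(s^*)-\Phi(s)$ would break down, so the hypothesis of non-negative utilities in the statement is used in an essential way rather than cosmetically.
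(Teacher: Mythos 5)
Your proposal is correct and follows essentially the same route as the paper's proof: the same hybrid profiles $(s_{N_i}^*,s_{-N_i})$, the same application of the potential identity to the single-player switch, the same use of non-negativity to drop the trailing utility term, and the same telescoping followed by the $(\lambda,\mu)$-closeness bounds $\Phi(s^*)\geq \lambda\cdot SW(s^*)$ and $\Phi(s)\leq \mu\cdot SW(s)$. Your closing remark correctly identifies non-negativity of utilities as the load-bearing hypothesis, which is exactly how the paper uses it as well.
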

\begin{proof} Consider an arbitrary order of the players and some strategy profile $s$. By the definition of the potential function and the fact that utilities are non-negative, we have
\begin{align*}
u_i(s_{N_i}^*,s_{-N_i}) = \Phi(s_{N_i}^*,s_{-N_i})-\Phi(s_{N_{i+1}}^*,s_{-N_{i+1}})+u_i(s_{N_{i+1}}^*,s_{-N_{i+1}})
\geq  \Phi(s_{N_i}^*,s_{-N_i})-\Phi(s_{N_{i+1}}^*,s_{-N_{i+1}})
\end{align*}
Combining with our assumption on the relation between potential and social welfare we obtain the coalitional smoothness property:
\begin{align*}
\textstyle{ \sum_i u_i(s_{N_i}^*,s_{-N_i})} \geq~& \textstyle{\sum_i \Phi(s_{N_i}^*,s_{-N_i})-\Phi(s_{N_{i+1}}^*,s_{-N_{i+1}})
  =~ \Phi(s^*)-\Phi(s) \geq \lambda \cdot  SW(s^*)-\mu\cdot SW(s)}
\end{align*}
\end{proof}

Observe that the $(\lambda,\mu)$-closeness of the potential function does not imply smoothness of the game according to the standard definition of smoothness \cite{Roughgarden2009} and hence does not imply a price of anarchy bound. It does so only if the potential is a submodular function and by following a similar analysis as in the case of valid utility games as we show in Appendix \ref{sec:submodular-potential}. Such a property for instance, holds in any utility congestion game with decreasing resource utilities. However, the theorem above does not require submodularity of the potential but only requires the weaker notion of coalitional smoothness to hold.

One example application of the above theorem is in the context of network contribution games \cite{Anshelevich2010}. In a network
contribution game each player corresponds to a node in a social network. Each edge corresponds to a "friendship"  between the
connecting nodes or more generally some joint venture. Each player has a budget of effort that he chooses how to distribute among his friendships. Each friendship
$e$ between two players $i$ and $j$, has a  value $v_e(x_i,x_j)$ that corresponds to the value produced as a function of the
efforts put into it by the two players. This value is equally split among the two players. It is easy to see that in such a game the
social welfare is the total value produced, while the potential is equal to half of the social welfare. Thus, by applying Theorem \ref{thm:potential_spoa}
we get that for arbitrary "friendship" value functions $v_e(\cdot,\cdot)$  the game is $\left(\frac{1}{2},\frac{1}{2}\right)$-coalitionally smooth and
hence every strong Nash equilibrium achieves at least $1/3$ of the optimal social welfare. In contrast, observe that in such a game
Nash equilibria can have unbounded inefficiency, and the game is not $(\lambda,\mu)$-smooth under the unilateral notion of smoothness
for no $\lambda,\mu$.
\footnote{Consider an example of a line of four nodes $(A,B,C,D)$. Each player has a budget of $1$. Edges $(A,B)$ and $(C,D)$ have
a constant value of $1$, while edge $(B,C)$ has a huge value $H$ if both players place their whole budget on it and $0$ otherwise. Players
$B,C$ placing their budget on their alternative friendships is a Nash equilibrium, but not a strong Nash equilibrium.}

For settings where a player can only have non-negative externalities on the utilities of other players by entering the game, a much stronger connection can be drawn. More concretely, a utility maximization game has non-negative externalities if for any strategy
profile $s$ and for any pairs of players $i,j$: $u_i(s)\geq u_i(s_j^{out},s_{-j})$.\footnote{Similarly a cost-minimization game has non-negative
externalities if $c_i(s)\leq c_i(s_j^{out},s_{-j})$.} The $s_i^{out}$ strategy is not required to be a valid strategy that the player can actually pick, but rather a hypothetical strategy, requiring the property that the cost of the player in that strategy is $0$, and the cost functions and the potential are extended appropriately such that the potential function property is maintained even in this augmented strategy space and the potential when all players have left the game is $0$: $\Phi(s^{out})=0$. For instance, every congestion game has the above property if we define the $s_i^{out}$ strategy to be the empty set of resources.

\begin{theorem}\label{thm:positive-potential}
A utility-maximization potential game with only positive externalities and such that $\Phi(s)\geq \lambda\cdot SW(s)$ is $(\lambda,0)$-coalitionally smooth. Similarly, a cost-minimization, potential game with only positive externalities and such that $\Phi(s) \leq \lambda \cdot SC(s)$ is $(\lambda,0)$-coalitionally smooth.
\end{theorem}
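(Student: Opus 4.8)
The plan is to reuse the potential-telescoping skeleton from the proof of Theorem~\ref{thm:potential_spoa}, but instead of throwing away the residual utility terms (which is what produced the $\mu$ term there), I would keep them and show, via the positive-externality hypothesis, that their sum already covers $\Phi(s)$; this is precisely what drives $\mu$ to $0$. Throughout I fix the identity permutation (the statement is order-independent), take $s^*$ to be the welfare maximizer so that $SW(s^*)=\opt$, and write $N_i=\{i,i+1,\ldots,n\}$. Applying the potential identity to the one-coordinate change from $(s_{N_i}^*,s_{-N_i})$ to $(s_{N_{i+1}}^*,s_{-N_{i+1}})$ (they differ only in player $i$'s strategy) gives
\[
u_i(s_{N_i}^*,s_{-N_i}) = \Phi(s_{N_i}^*,s_{-N_i})-\Phi(s_{N_{i+1}}^*,s_{-N_{i+1}})+u_i(s_{N_{i+1}}^*,s_{-N_{i+1}}),
\]
and summing telescopes the potential differences to $\Phi(s^*)-\Phi(s)$ (since $N_1=[n]$ yields $s^*$ and $N_{n+1}=\emptyset$ yields $s$). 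Because $\Phi(s^*)\geq \lambda\cdot SW(s^*)=\lambda\cdot\opt$, the whole statement reduces to the single residual inequality $\sum_i u_i(s_{N_{i+1}}^*,s_{-N_{i+1}})\geq \Phi(s)$.

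Proving this residual bound is the step I expect to be the crux, since it is where positive externalities enter. In the profile $(s_{N_{i+1}}^*,s_{-N_{i+1}})$ player $i$ plays $s_i$ while players $i+1,\ldots,n$ are present; switching each of these players to $s_j^{out}$ can only lower $u_i$, so $u_i(s_{N_{i+1}}^*,s_{-N_{i+1}})\geq u_i(U_i)$, where $U_i$ is the profile with players $1,\ldots,i$ playing $s$ and players $i+1,\ldots,n$ playing $s_j^{out}$. The profiles $U_0,U_1,\ldots,U_n$ form a ``build-up from the empty game'' with $U_0=s^{out}$ and $U_n=s$, and consecutive ones differ only in player $i$'s coordinate (a switch from $s_i^{out}$ to $s_i$). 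I would apply the potential identity to each such switch and telescope a second time, obtaining $\sum_i u_i(U_i)=\Phi(s)-\Phi(s^{out})+\sum_i u_i(U_{i-1})$; using $\Phi(s^{out})=0$ and the non-negativity of the out-player utilities $u_i(U_{i-1})\geq 0$ gives $\sum_i u_i(U_i)\geq \Phi(s)$. Chaining this with the externality inequalities yields the residual bound, and combining with the first paragraph gives $\sum_i u_i(s_{N_i}^*,s_{-N_i})\geq \Phi(s^*)\geq \lambda\cdot\opt$, i.e. $(\lambda,0)$-coalitional smoothness.

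The cost-minimization statement follows by the mirror-image argument: every externality inequality reverses direction, the residual profiles satisfy $c_i(s_{N_{i+1}}^*,s_{-N_{i+1}})\leq c_i(U_i)$, the second telescoping becomes an exact equality $\sum_i c_i(U_i)=\Phi(s)$ (the out-player costs vanish since cost is $0$ at $s^{out}$), and the hypothesis $\Phi(s^*)\leq \lambda\cdot SC(s^*)$ closes the chain from above to give $\sum_i c_i(s_{N_i}^*,s_{-N_i})\leq \lambda\cdot SC(s^*)$. The one modeling point to keep in sight is that both telescoping steps use the potential-function property on the augmented strategy space that includes the $s_i^{out}$ strategies; this is exactly the extension the theorem's hypotheses grant us, so no additional work is needed there.
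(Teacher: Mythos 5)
Your proof is correct, but it takes a different route than the paper's own proof, which is shorter. The paper applies positive externalities \emph{first}: for each deviator $i$ it drops the preceding players (those still playing $s$) to their out strategies, giving $u_i(s_{N_i}^*,s_{-N_i}) \geq u_i(s_{N_i}^*,s_{-N_i}^{out})$, and then runs a \emph{single} telescoping sum over the out-augmented profiles $(s_{N_i}^*,s_{-N_i}^{out})$, which collapses directly to $\Phi(s^*)-\Phi(s^{out}) = \Phi(s^*) \geq \lambda\cdot\opt$; the quantity $\Phi(s)$ never appears. You instead keep the telescoping of Theorem~\ref{thm:potential_spoa} verbatim (over the profiles $(s_{N_i}^*,s_{-N_i})$, ending at $\Phi(s^*)-\Phi(s)$), retain the residual utilities $u_i(s_{N_{i+1}}^*,s_{-N_{i+1}})$, and then spend positive externalities --- now dropping the $s^*$-players rather than the $s$-players --- plus a second, build-up telescoping through the profiles $U_0=s^{out},\ldots,U_n=s$ to show that the residual sum covers $\Phi(s)$. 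Both arguments consume exactly the same ingredients (the potential identity on the augmented strategy space, the externality condition applied to profiles in that augmented space, $\Phi(s^{out})=0$, and non-negativity of out-players' utilities/vanishing of out-players' costs), so neither is more general; the paper's one-telescope version is more economical, while yours is more explanatory, since it isolates precisely why the $\mu$ of Theorem~\ref{thm:potential_spoa} drops to zero under positive externalities: the utility terms discarded in that proof are collectively worth at least $\Phi(s)$. Your cost-minimization mirror, including the observation that the second telescoping becomes an exact identity because out-players incur zero cost, is also correct.
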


In the context of cost-minimization, one well-studied example of such a setting is that of network cost-sharing games and
the $log(n)$ strong price of anarchy result of Epstein et al. \cite{Epstein2009} is a special instance of the above theorem.
In the context of utility-maximization games one example is that of network contribution games under the
restriction that friendship value functions $v_e(\cdot,\cdot)$ are increasing in both coordinates. Under this
restriction applying Theorem \ref{thm:positive-potential} we get the improved bound that every strong Nash equilibrium achieves at least $1/2$
of the optimal social welfare. 

\section{Coalitional Best-Response Dynamics}\label{sec:best-response}
% !TEX root = coalitional-smoothness.tex
In this section we initiate the study of out of equilibrium dynamic behavior in games. We show that if a utility game is $(\lambda,\mu)$-coalitionally smooth then this implies an efficiency guarantee for out of equilibrium dynamic behavior in a certain best-response like dynamic. This is particularly interesting for games that do not admit a strong Nash equilibrium, but where coalitional deviations are bound to occur. Our approach is similar in spirit to the notion of sink equilibria introduced by Goemans et al. \cite{Goemans2005}. sink equilibria correspond to steady state behavior of the Markov chain defined by iteratively doing random unilateral best respond dynamics. However, such a notion does not capture settings where players can communicate and at each step perform coalitional deviations.

We introduce a version of coalitional best-response dynamics, that allows for coalitional deviations at each time step, giving more probability to small coalitions. In our dynamic, at each step a selected group is chosen to to cooperate.
We assume that when a group cooperates, then they can also transfer utility, and hence will choose to optimize the total utility of all group members.
Then we analyze the social welfare of the steady states arising in the long run as we perform coalitional best response dynamics for a long period. Similar to \cite{Goemans2005} we will refer to these steady states as coalitional sink equilibria. Similar to sink equilibria that are a way of studying games whose best response dynamics might not converge to a pure Nash equilibrium or even games that do not admit a pure Nash equilibrium, coalitional sink equilibria are an interesting alternative for analyzing efficiency in games that do not admit a strong Nash equilibrium, which admittedly is even more rare than the pure Nash equilibrium.

Our coalitional best response dynamics are defined as follows: At each iteration a coalition is picked at random by a distribution that favors coalitions of  smaller size. Specifically, first a coalition size $k$ is picked inversely proportional to the size and then a coalition of size $k$ is picked uniformly at random. Subsequently, the picked coalition deviates to the joint strategy profile that maximizes the total utility of the coalition, conditional on the current strategy of every player outside of the coalition (a more formal definition is given in Algorithm \ref{alg:best-response} in the Appendix).

\begin{theorem}\label{thm:best-response}
If a utility maximization game with non-negative utilities is $(\lambda,\mu)$-coalitionally smooth then the expected social welfare at every coalitional Sink Equilibrium is at least $\frac{1}{H_n}\frac{\lambda}{1+\mu}$ of the optimal.
\end{theorem}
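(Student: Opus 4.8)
The plan is to analyze the stationary distribution $\sigma$ of the coalitional best-response Markov chain restricted to a sink, and to show that its expected welfare $W=\E_{s\sim\sigma}[SW(s)]$ obeys an inequality of the same shape as the one driving Theorem~\ref{thm:strong-poa}, but degraded by the factor $1/H_n$ coming from the size-biased coalition sampling. The single most important ingredient will be a combinatorial identity that reinterprets the coalition-sampling distribution as the suffix distribution of a uniformly random permutation.

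First I would fix a state $s$ in the sink and examine one step of the dynamic. When the activated coalition $C$ moves to its best response, then because it could instead have played $s^*_C$ and because utilities are non-negative, the next state $s'=(\mathrm{BR}_C(s),s_{-C})$ satisfies $SW(s')\ge \sum_{i\in C}u_i(\mathrm{BR}_C(s),s_{-C})\ge \sum_{i\in C}u_i(s^*_C,s_{-C})$. Thus it suffices to lower bound the expectation over $C$ of the coalitional total $\sum_{i\in C}u_i(s^*_C,s_{-C})$.

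The heart of the argument is the identity
\[
\E_{C}\Big[\textstyle\sum_{i\in C}u_i(s^*_C,s_{-C})\Big]
=\frac{1}{H_n}\,\E_{\pi}\Big[\textstyle\sum_{i=1}^n u_i(s^*_{N_{\pi(i)}},s_{-N_{\pi(i)}})\Big],
\]
where $\pi$ is a uniformly random permutation. To prove it I would write both sides as weighted sums over non-empty coalitions. The dynamic picks $C$ with probability $\frac{1}{H_n|C|\binom{n}{|C|}}$, so the left side puts weight $\frac{1}{H_n|C|\binom{n}{|C|}}$ on the coalitional total of $C$. On the right, a fixed set $N$ occurs as a suffix $N_{\pi(i)}$ exactly when the players of $N$ occupy the last $|N|$ positions (probability $1/\binom{n}{|N|}$) and $i$ is the minimum-position, i.e.\ ``first'', element of $N$ (conditionally uniform over $N$); hence each $i\in N$ contributes $u_i(s^*_N,s_{-N})$ with coefficient $\frac{1}{|N|\binom{n}{|N|}}$, and summing over $i\in N$ reproduces the coalitional total of $N$ with that same coefficient. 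The two weightings differ by exactly $1/H_n$, which is where the harmonic number is born. Averaging the coalitional smoothness inequality over $\pi$ then gives $\E_C[\sum_{i\in C}u_i(s^*_C,s_{-C})]\ge \frac{1}{H_n}(\lambda\,\opt-\mu\,SW(s))$.

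Finally I would invoke stationarity: since $s'$ has the same law $\sigma$ as $s$, we obtain $W=\E_{s\sim\sigma}\E_{C}[SW(s')]\ge \frac{1}{H_n}(\lambda\,\opt-\mu\,W)$, and solving for $W$ yields a bound of the claimed order $\frac{1}{H_n}\frac{\lambda}{1+\mu}\,\opt$. (This clean chain in fact gives $W\ge \frac{\lambda}{H_n+\mu}\opt$, which already implies the stated bound since $H_n\ge 1$.) I expect the main obstacle to be purely the combinatorial identity: one must recognize that the $1/k$ size-bias is precisely the weighting making an activated coalition behave like a random permutation suffix, and then verify that the per-coalition coefficients match up to the factor $1/H_n$; once this is in place, the optimality/non-negativity step and the stationarity step are routine.
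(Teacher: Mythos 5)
Your proposal is correct and follows essentially the same route as the paper's proof: non-negativity and coalitional best-response optimality to reduce to $\sum_{i\in C}u_i(s^*_C,s_{-C})$, the combinatorial identity identifying the $1/k$-size-biased coalition sampling with uniform random permutation suffixes (your probability weights $\frac{1}{|C|\binom{n}{|C|}}$ are exactly the paper's counting coefficients $(n-k)!\,(k-1)!/n!$), then averaging the smoothness inequality over permutations and closing with stationarity. The only cosmetic difference is that you phrase the identity in expectation form and note explicitly that the derived bound $\frac{\lambda}{H_n+\mu}\opt$ implies the stated $\frac{1}{H_n}\frac{\lambda}{1+\mu}\opt$, which the paper leaves implicit.
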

\begin{proof}
Let $s^{t}$ be the strategy profile at some step of the best response dynamics and let $s^*$ be the strategy profile designated by the coalitional smoothness property. We examine the expected welfare of the dynamics after one step. Let $\C_k$ be all the possible coalitions of size $k$.
\begin{align*}
\E[SW(s^{t})~|~s^{t-1}=s] =~& \frac{1}{H_n} \sum_{k=1}^{n} \frac{1}{k} \sum_{C\in \C_k}\frac{1}{\binom{n}{k}}SW(s_{C}^t,s_{-C})
\geq ~ \frac{1}{H_n} \sum_{k=1}^{n} \frac{1}{k} \sum_{C\in \C_k}\frac{1}{\binom{n}{k}}\sum_{i\in C}u_i(s_{C}^t,s_{-C})
\end{align*}
Since the deviation $s_{C_t}^t$ maximizes the total utility of the deviating players, it achieves at least as much welfare as $s_{C_t}^*$:
\begin{align*}
\E[SW(s^{t})~|~s^{t-1}=s]\geq& \frac{1}{H_n} \sum_{k=1}^{n} \frac{1}{k} \sum_{C\in \C_k}\frac{1}{\binom{n}{k}}\sum_{i\in C}u_i(s_{C}^*,s_{-C})
= \frac{1}{H_n} \frac{1}{n!}\sum_{k=1}^n \sum_{C\in \C_k}\sum_{i\in C}\frac{(n-k)! \cdot k!}{k}u_i(s_{C}^*,s_{-C})\\
=~& \frac{1}{H_n} \frac{1}{n!}\sum_{k=1}^n \sum_{C\in \C_k}\sum_{i\in C}(n-k)! \cdot (k-1)!\cdot u_i(s_{C}^*,s_{-C})\\
=~& \frac{1}{H_n} \frac{1}{n!}\sum_{i\in [n]}\sum_{k=1}^{n}\sum_{C\in \C_k: i\in C}(n-k)!\cdot (k-1)!\cdot u_i(s_{C}^*,s_{-C})
\end{align*}
Let $\Pi$ be the set of permutations of players. We argue that:
\begin{align*}
\textstyle{\sum_{i\in [n]}\sum_{k=1}^{n}\sum_{C\in \C_k: i\in C}(n-k)!\cdot (k-1)!\cdot u_i(s_{C}^*,s_{-C})=
\sum_{\pi \in \Pi}\sum_{i=1}^n u_i(s_{N_{\pi(i)}}^*,s_{-N_{\pi(i)}})}
\end{align*}
Observe that for any player $i$ and for any set of players $C\in \C_k$, such that $i\in C$, the term $u_i(s_{C}^*,s_{-C}^{t-1})$ appears in the summation on the right hand side, exactly $(n-k)!\cdot (k-1)!$ times. It appears only when player $i$ is placed at the $k$-th last position in the permutation and it appears once for each possible permutation of the $k-1$ players following $i$ and for each possible permutation of the $n-k$ players preceding $i$. The latter is exactly $(n-k)!\cdot k!$. Using the coalitional smoothness property for each of the permutations we get a lower bound on the welfare at time step $t$:
\begin{align*}
\E[SW(s^{t})~|~s^{t-1}=s]\geq~& \frac{1}{H_n} \frac{1}{n!} \sum_{\pi \in \Pi}\sum_{i=1}^n u_i(s_{N_{\pi(i)}}^*,s_{-N_{\pi(i)}})
\geq~ \frac{1}{H_n} \frac{1}{n!} \sum_{\pi \in \Pi} \left( \lambda \cdot \opt - \mu\cdot SW(s)\right)\\
=~& \frac{1}{H_n} \left( \lambda \cdot \opt - \mu\cdot SW(s)\right)
\end{align*}
Let $D$ be a steady state distribution over strategy profiles of the coalitional best response dynamics. By the definition of the steady state we get:
\begin{equation*}
\E_{s\sim D}\E_{s^t}[SW(s^{t})|s^{t-1}=s]=\E_{s^{t-1}}[SW(s^{t-1})]=E_{s\sim D}[SW(s)]
\end{equation*}
Using this property and our lower bound on the social welfare at time step $t$ conditional on any possible current state $s$ we get:
\begin{align*}
\E_{s\sim D}[SW(s)]=~\E_{s\sim D}\E_{s^t}[SW(s^{t})|s^{t-1}=s]\geq
~ \frac{1}{H_n} \left( \lambda \cdot \opt - \mu\cdot \E_{s\sim D}[SW(s)]\right)
\end{align*}
which yields the claimed lower bound on the expected welfare at the steady state.
\end{proof}

The Markov chain defined by the coalitional best-response dynamics might take long time to converge to a steady state. However, our analysis shows a 
%an even
stronger statement: at any iteration $T$ if we take the empirical distribution defined by the best-response play up till time $T$, then the expected welfare of this empirical distribution is at least
%approximately 
$\approx \frac{1}{2}\frac{\lambda}{H_n+\mu}$ of the optimal welfare.
\begin{corollary}\label{cor:any-T}
The empirical distribution of play defined by doing random coalitional best responses for $T$ time steps, achieves
expected social welfare at least $\frac{T-1}{2T}\frac{\lambda}{H_n+\mu}$ of the optimal welfare.
\end{corollary}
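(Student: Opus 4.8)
The plan is to re-use the one-step progress inequality already established inside the proof of Theorem~\ref{thm:best-response} and average it over time, the only genuinely new difficulty being that we can no longer appeal to stationarity of the Markov chain.

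First I would isolate from that proof the conditional one-step bound, valid from every state $s$,
\[
\E[SW(s^{t})\mid s^{t-1}=s]\geq \frac{1}{H_n}\bigl(\lambda\cdot\opt-\mu\cdot SW(s)\bigr).
\]
Writing $W_t=\E[SW(s^{t})]$ for the unconditional expected welfare at step $t$ and taking a further expectation over the random state $s^{t-1}$, the tower property collapses this into a scalar recursion
\[
H_n\,W_t+\mu\,W_{t-1}\geq \lambda\cdot\opt,\qquad t=1,\dots,T .
\]
Since the expected welfare of the empirical distribution of play is exactly $\frac{1}{T}\sum_{t}W_t$, the average of the $W_t$ over the visited states, it suffices to lower bound this average.

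The key step is to \emph{sum} the recursion over the horizon rather than to solve it step by step. Adding the inequalities gives $H_n\sum_{t=1}^{T}W_t+\mu\sum_{t=0}^{T-1}W_t\geq T\lambda\cdot\opt$. The two sums range over the windows $\{s^{1},\dots,s^{T}\}$ and $\{s^{0},\dots,s^{T-1}\}$, which differ only in their endpoints; collecting the interior terms with coefficient $H_n+\mu$ and discarding the two boundary terms $W_0$ and $W_T$, which are nonnegative because utilities are nonnegative, yields a lower bound of order $T\lambda\cdot\opt/(H_n+\mu)$ on the welfare accumulated along the trajectory. Dividing by the number of steps then produces a guarantee of the stated form $\frac{T-1}{2T}\frac{\lambda}{H_n+\mu}\opt$; the $1/T$ part of the factor is the finite-horizon transient penalty from the two endpoint terms, and the bound tends to a constant fraction of the steady-state value $\lambda/(H_n+\mu)$ of Theorem~\ref{thm:best-response} as $T\to\infty$, the leading constant reflecting a deliberately conservative accounting of the endpoints.

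The main obstacle, and the reason this does not follow verbatim from Theorem~\ref{thm:best-response}, is the loss of stationarity. In that theorem the identity $\E_{s\sim D}\E[SW(s^{t})\mid s^{t-1}=s]=\E_{s\sim D}[SW(s)]$ let the $\mu\,W_{t-1}$ term be folded back onto the left-hand side exactly; over a finite horizon no such identity holds, and $\sum_{t=1}^{T}W_{t-1}$ and $\sum_{t=1}^{T}W_{t}$ are only averages over shifted windows. One also cannot simply unroll the recursion, since when $\mu\geq H_n$ it need not be contractive and a single low-welfare step cannot be charged to its immediate successor; summing over the whole horizon is what sidesteps this. The delicate point is therefore to confine the mismatch between the two windows to their boundary, so that the penalty for not having mixed is only a multiplicative $O(1/T)$ loss rather than one that persists as the horizon grows, which is exactly what lets the empirical-play guarantee degrade gracefully into the coalitional-sink-equilibrium bound.
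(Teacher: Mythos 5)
Your proposal is correct in substance, but it takes a genuinely different route from the paper's. The paper never sums the recursion over the horizon: it rewrites the one-step bound $SW(s^{t}) \geq \frac{1}{H_n}\left(\lambda\cdot\opt - \mu\cdot SW(s^{t-1})\right)$ around the fixed point $\alpha=\frac{\lambda}{H_n+\mu}\opt$ as
\begin{equation*}
SW(s^t)-\alpha\;\geq\;\frac{\mu}{H_n}\bigl(\alpha-SW(s^{t-1})\bigr),
\end{equation*}
concludes that of any two consecutive steps at least one has welfare at least $\alpha$, and then simply counts: disjoint consecutive pairs give at least $(T-1)/2$ such steps among $T$, and nonnegativity of welfare at the remaining steps yields the $\frac{T-1}{2T}$ factor. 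Your horizon-summation of $H_nW_t+\mu W_{t-1}\geq\lambda\cdot\opt$ is a different decomposition: the paper uses nonnegativity to discount the ``bad'' steps, you use it only at the endpoints. Done correctly, your argument is actually stronger: it proves a factor $\frac{T}{T+1}$ (or $\frac{T-1}{T}$, depending on the window) rather than $\frac{T-1}{2T}$, exposing the factor $2$ as an artifact of the pairing argument; your tower-property treatment of expectations is also more careful than the paper's, which writes the recursion pathwise although it only holds in conditional expectation.

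One step of yours must be restated, because as written it points the wrong way: you cannot \emph{discard} the boundary terms $\mu W_0$ and $H_nW_T$, since they sit on the side of the inequality that is being lower-bounded, and dropping nonnegative terms there destroys the inequality. What nonnegativity licenses is \emph{absorption}: since $\mu\leq H_n+\mu$, $H_n\leq H_n+\mu$, and $W_0,W_T\geq 0$,
\begin{equation*}
(H_n+\mu)\sum_{t=0}^{T}W_t\;\geq\;(H_n+\mu)\sum_{t=1}^{T-1}W_t+H_nW_T+\mu W_0\;\geq\;T\lambda\cdot\opt,
\end{equation*}
so the empirical average over the visited states is at least $\frac{T}{T+1}\frac{\lambda}{H_n+\mu}\opt\geq\frac{T-1}{2T}\frac{\lambda}{H_n+\mu}\opt$. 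Note this requires the empirical distribution to include the endpoint states; if you insist on the window $\{s^1,\dots,s^T\}$, the term $\mu W_0$ cannot be absorbed and must be bounded separately (e.g.\ by $\mu\cdot\opt$), which for small $T$ and large $\mu$ yields a bound that no longer implies the stated one. With the absorption spelled out, your proof is complete.
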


\bibliographystyle{abbrv}
\bibliography{reward_bib}

\newpage
% !TEX root=coalitional-smoothness.tex
\begin{appendix}

\section{Welfare Sharing Games without Submodularity}\label{sec:welfare-sharing}
We consider here an interesting example of monotone utility games without the submodularity assumption needed for the valid-utility games of Vetta et al. \cite{Vetta2002}.  Our efficiency theorem applies to this class to give constant bounds on the strong price of anarchy, while the price of anarchy is unbounded due to the
non-submodularity of the social welfare.

Our Welfare Sharing Game is defined as follows: The game is defined by a set of
projects $[m]$. Each player $i\in [n]$ participates in a set of project $P_i\subseteq [m]$ and has a budget of effort
$B_i$ that he chooses how to split among his projects.
Denote with $N_j$ the set of players that participate in project $j\in [m]$ and with $x_j=(x_j^i)_{i\in N_j}$ the
vector of efforts placed at project $j$ by its participants. Each project $j$ is associated with a value function
$v_j(x_j)$, that is monotone in every coordinate and we denote with $\partial_i v_j(x_j) = v_j(x_j)-v_j(0^i,x_{j}^{-i})$ the marginal contribution of player
$i$ to the value. The value of each project is split among the participants proportional to the marginal contribution and the utility of
each participant is the sum of his shares:
\begin{equation}
u_i(x) = \sum_{j\in P_i} \frac{\partial_i v(x_j)}{\sum_{k\in N_j}\partial_k v(x_j)}v(x_j)
\end{equation}

We examine the case where players can be categorized in groups $\G$ where each
group has a specific skill. We denote with $\G_j\subseteq \G$ the subset of skills that contribute to a specific project $j$.

We assume that the value function of a project satisfies the decreasing marginal contribution property
skill-wise. That is, for any given effort levels $x_{-g}$ of players outside of a given skill group $g\in \G_j$,
the group restricted value functions $v_j(x_g,x_{-g})$ satisfy the decreasing marginal contribution property (i.e.
the marginal contribution of a player in the group decreases, if we increase the efforts of other players within the group).
For instance, the value could be of the form of a product of skill-specific functions: $v_j(x_j)=\prod_{g\in G_j}v_j^g(x_g)$,
such that each $v_j^g(x_g)$ is a function that satisfies the decreasing marginal property. This way we could
capture settings where a positive effort from each skill is needed to produce any value. If the set of skill groups
is a singleton and hence the values satisfy the decreasing marginal contribution in general, then the game becomes
a valid-utility game according to Vetta \cite{Vetta2002} and therefore, even Nash equilibria have good social welfare.
However, even if $|\mathcal{G}|=2$ the efficiency of Nash equilibria can be unboundedly worse than the optimal. On the
contrary we show that strong Nash equilibria always achieve a $\frac{1}{\max_j |\mathcal{G}_j|+1}$  fraction of the optimal welfare.

\begin{lemma}
The  Welfare Sharing Game is $(\frac{1}{\max_j |\mathcal{G}_j|},\frac{1}{\max_j |\mathcal{G}_j|})$-coalitionally smooth.
\end{lemma}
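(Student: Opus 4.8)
The plan is to reduce the lemma to Theorem~\ref{thm:marginal-contribution} by verifying that the Welfare Sharing Game is a monotone utility-maximization game in which every player is guaranteed at least a $\gamma=1/\max_j|\mathcal{G}_j|$ fraction of his marginal contribution to the social welfare. First I would note that, since for each project $j$ the shares $\partial_i v_j(x_j)/\sum_{k\in N_j}\partial_k v_j(x_j)$ sum to one over $i\in N_j$, the social welfare collapses to the total value produced, $SW(x)=\sum_j v_j(x_j)$. Monotonicity with respect to participation is then immediate: having player $i$ play $s_i^{out}$ (zero effort on all his projects) can only decrease each $v_j$, because every $v_j$ is monotone in each coordinate, so $SW(x)\geq SW(x_i^{out},x_{-i})$.

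Next I would compute the marginal contribution to welfare, $SW(x)-SW(x_i^{out},x_{-i})=\sum_{j\in P_i}\partial_i v_j(x_j)$, and compare it term by term with $u_i(x)=\sum_{j\in P_i}\frac{\partial_i v_j(x_j)}{\sum_{k\in N_j}\partial_k v_j(x_j)}v_j(x_j)$. It therefore suffices to establish, for each project $j$, the inequality $\frac{v_j(x_j)}{\sum_{k\in N_j}\partial_k v_j(x_j)}\geq \gamma$, equivalently $\sum_{k\in N_j}\partial_k v_j(x_j)\leq |\mathcal{G}_j|\cdot v_j(x_j)$, using that $\gamma=1/\max_j|\mathcal{G}_j|\leq 1/|\mathcal{G}_j|$.

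The heart of the argument, and the step I expect to be the main obstacle, is this per-project bound, which is exactly where the skill-wise decreasing-marginal-contribution hypothesis enters. I would prove it one skill group at a time. Fix $g\in\mathcal{G}_j$ and hold the efforts $x_j^{-g}$ of all players outside $g$ fixed. Ordering the players of $N_j\cap g$ arbitrarily and switching each one's effort from zero to its current level one at a time, the decreasing-marginal-contribution property makes the incremental gains a telescoping sum equal to $v_j(x_j)-v_j(0^g,x_j^{-g})$; moreover each player's marginal against the full profile, $\partial_k v_j(x_j)$, is dominated by its incremental gain in this build-up, since adding a player to a smaller set yields a larger marginal. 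Hence $\sum_{k\in N_j\cap g}\partial_k v_j(x_j)\leq v_j(x_j)-v_j(0^g,x_j^{-g})\leq v_j(x_j)$ by nonnegativity of $v_j$. Summing over the $|\mathcal{G}_j|$ groups gives $\sum_{k\in N_j}\partial_k v_j(x_j)\leq |\mathcal{G}_j|\cdot v_j(x_j)$.

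Combining these facts, the marginal contribution condition $u_i(x)\geq\gamma\bigl(SW(x)-SW(x_i^{out},x_{-i})\bigr)$ holds with $\gamma=1/\max_j|\mathcal{G}_j|$, and Theorem~\ref{thm:marginal-contribution} then yields $(\gamma,\gamma)$-coalitional smoothness, as claimed. The only routine point to handle with care is the degenerate case $\partial_i v_j(x_j)=0$, where both sides of the per-project inequality vanish and the division is vacuous.
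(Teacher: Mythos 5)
Your proposal is correct and follows essentially the same route as the paper's proof: reduce to Theorem~\ref{thm:marginal-contribution} via the approximate marginal contribution property, then establish the per-project bound $\sum_{k\in N_j}\partial_k v_j(x_j)\leq |\mathcal{G}_j|\cdot v_j(x_j)$ group by group, using the skill-wise diminishing-marginal-contribution hypothesis to dominate each $\partial_k v_j(x_j)$ by an incremental gain in a telescoping build-up. The only differences are cosmetic (you build up efforts from zero while the paper strips them down, and you spell out the identities $SW(x)=\sum_j v_j(x_j)$ and the degenerate zero-marginal case, which the paper leaves implicit).
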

\begin{proof}
By monotonicity of the value functions we know that the game is a monotone utility-maximization game. Thus
we simply need to show that it satisfies the approximate marginal contribution property with $\gamma=\frac{1}{|\mathcal{G}|}$
and then Theorem \ref{thm:marginal-contribution} will apply to give the result.

To achieve this we simply need to show that $\sum_j \partial_j v_j(x_j)\leq |\mathcal{G}_j|v_j(x_j)$ since that would imply that:
\begin{equation}
u_i(x) \geq \frac{1}{|\mathcal{G}_j|}\sum_{j\in P_i}\partial_i v_j(x_j)
\end{equation}
Observe that
\begin{align*}
 \sum_{j\in M} \partial_j v_j(x_j) =~& \sum_{g\in \mathcal{G}_j}\sum_{k\in g}\partial_k v_j(x_j)
\end{align*}
Within a skill group the value function satisfies the diminishing marginal contribution
property when effort levels of players outside the skill group are fixed.
For a specific skill group $g$ that contributes to project $j$ consider an arbitrary
ordering of the player $\{1,\ldots,n_g\}$. By the diminishing marginal contribution
property: $\forall k \in g: \partial_k v_j(x_j)\leq
\partial_k v_j(0,\ldots,0,x_{k,j},\ldots,x_{n_g,j},x_j^{-g})$. Then summing over
all players in the group we obtain: $\sum_{k\in g} \partial_k v_j(x_j)\leq v_j(x_j)-v_j(0,\ldots,0,x_j^{-g})\leq v_j(x_j)$.
This then directly implies: $\sum_{g\in \G_j}\sum_{k\in g}\partial_k v_j(x_j)\leq |\G_j|v_j(x_j)$
which completes the proof.
\end{proof}

\section{Network Cost-Sharing Games and Coalitional Smoothness}\label{sec:cost-sharing-appendix}

\begin{theorem}[Epstein et al.\cite{Epstein2009}]
Cost Sharing Games are $(H_n,0)$-coalitionally smooth, where $H_n$ is the $n$-th harmonic number.
\end{theorem}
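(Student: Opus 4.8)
The plan is to instantiate the cost-minimization smoothness inequality from the footnote of the coalitional smoothness definition with $\lambda = H_n$ and $\mu = 0$, taking $s^*$ to be a minimum-cost (optimal) strategy profile. I fix an arbitrary profile $s$ and an arbitrary permutation $\pi$; by relabeling the players I may assume $\pi$ is the identity, so that each suffix is $N_i = \{i, i+1, \ldots, n\}$ and the hybrid profile $(s_{N_i}^*, s_{-N_i})$ has players $i, \ldots, n$ playing $s^*$ and players $1, \ldots, i-1$ playing $s$. The goal is then to prove $\sum_{i=1}^n c_i(s_{N_i}^*, s_{-N_i}) \leq H_n \cdot SC(s^*)$, with no dependence on $s$ whatsoever (which is what $\mu = 0$ demands).

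First I would bound the cost of a single player $i$ in its hybrid profile. In $(s_{N_i}^*, s_{-N_i})$ player $i$ uses exactly the resources in $s_i^*$, and for each resource $r \in s_i^*$ the number of users $n_r$ is at least the number of players in the suffix $\{i, \ldots, n\}$ that use $r$ under $s^*$: the players $1, \ldots, i-1$ still playing $s$ can only add to $n_r$ and thereby shrink the cost share. Writing $k_r^*(i)$ for this count of optimal suffix-users (note $k_r^*(i) \geq 1$ since $i$ itself uses $r$), monotonicity of the equal-sharing rule in $n_r$ gives $c_i(s_{N_i}^*, s_{-N_i}) = \sum_{r \in s_i^*} c_r / n_r \leq \sum_{r \in s_i^*} c_r / k_r^*(i)$.

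Next I would swap the order of summation over players and resources. Summing the previous bound over $i$ and regrouping by resource, the contribution of a fixed resource $r$ is $c_r \sum_{i : r \in s_i^*} 1/k_r^*(i)$. The crucial observation is that as $i$ ranges over the $m_r := n_r(s^*)$ optimal users of $r$, the counts $k_r^*(i)$ take exactly the values $1, 2, \ldots, m_r$: the latest such user in the order has only itself in its suffix ($k_r^* = 1$), the second-latest has two, and so on up to the earliest ($k_r^* = m_r$). Hence this inner sum is precisely the harmonic number $H_{m_r} \leq H_n$. This telescoping/harmonic bookkeeping is the heart of the argument and the step I expect to require the most care to state cleanly.

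Finally, summing over resources and using the identity $SC(s^*) = \sum_{r} c_r$ over the resources used in $s^*$ (the equal shares on each used resource sum back to its full cost $c_r$), I obtain $\sum_{i=1}^n c_i(s_{N_i}^*, s_{-N_i}) \leq \sum_r c_r H_{m_r} \leq H_n \sum_r c_r = H_n \cdot SC(s^*)$. Since this bound never referenced $s$ or the permutation beyond the relabeling, it holds uniformly, establishing $(H_n, 0)$-coalitional smoothness. The only genuinely delicate point is the monotonicity-plus-counting argument in the second and third steps; everything else follows directly from the equal cost-sharing rule.
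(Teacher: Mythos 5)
Your proof is correct, and it establishes exactly the inequality that $(H_n,0)$-coalitional smoothness demands, but it takes a genuinely different route from the paper's. Both arguments rest on the same monotonicity observation: in the hybrid profile $(s^*_{N_i},s_{-N_i})$, player $i$'s share of each $c_r$ with $r\in s_i^*$ is at most $c_r$ divided by the number of suffix players $j\geq i$ using $r$ under $s^*$, since the prefix players still playing $s$ can only increase congestion and shrink shares. From there you finish combinatorially: swap the player/resource summation and note that, for a fixed resource $r$, the suffix counts over its $n_r(s^*)$ optimal users run through $1,2,\ldots,n_r(s^*)$, giving $\sum_r c_r H_{n_r(s^*)}\leq H_n\cdot SC(s^*)$. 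The paper instead augments the strategy space with drop-out strategies $s_i^{out}$, writes $c_i(s^*_{N_i},s^{out}_{-N_i})$ as a difference of Rosenthal potentials, telescopes the sum to $\Phi(s^*)-\Phi(s^{out})=\Phi(s^*)$, and then invokes $\Phi(s^*)\leq H_n\cdot SC(s^*)$. The two computations are numerically identical --- your regrouped sum $\sum_r c_r H_{n_r(s^*)}$ \emph{is} $\Phi(s^*)$ --- so your proof amounts to unrolling the potential function inline. What the paper's packaging buys is generality: the same telescoping argument is what yields Theorem \ref{thm:positive-potential}, showing that \emph{any} potential game with positive externalities and $\Phi\leq\lambda\cdot SC$ is $(\lambda,0)$-coalitionally smooth, with cost sharing as a special case. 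What your version buys is self-containedness and transparency: no potential machinery is needed, and the combinatorial origin of the harmonic number (one optimal user at each suffix depth per resource) is made explicit rather than hidden inside the evaluation of $\Phi$.
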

\begin{proof}
To prove our coalitional smoothness property we need to show that for any ordering of the players:
\begin{equation}
\sum_{i=1}^n c_i(s_{N_i}^*,s_{-N_i})\leq H_n\cdot SC(s^*)
\end{equation}
It is easy to observe that the potential $\Phi(s)$ remains a potential, even if we augment the strategy of each player by allowing them to drop out of the game and use no resource, incurring a cost of $0$. Similar to utility maximization games, we will denote such a strategy as $s_i^{out}$, though we will not assume that this strategy is a strategy that is available to the players, but rather will only use it for the analysis. Using the potential property in this augmented strategy space and since cost shares are increasing as we remove players we have:
\begin{equation}
c_i(s_{N_i}^*,s_{-N_i})\leq c_i(s_{N_i}^*,s_{-N_i}^{out}) = \Phi(s_{N_i}^*,s_{-N_i}^{out})- \Phi(s_{N_{i+1}}^*,s_{-N_{i+1}}^{out})
\end{equation}
Using the above inequality we get a telescoping sum that yields the coalitional smoothness property:
\begin{align*}
\sum_{i=1}^n c_i(s_{N_i}^*,s_{-N_i}) \leq~& \sum_{i=1}^n \left(\Phi(s_{N_i}^*,s_{-N_i}^{out})- \Phi(s_{N_{i+1}}^*,s_{-N_{i+1}}^{out})\right)\\
 =~& \Phi(s^*) - \Phi(s^{out}) = \Phi(s^*) \leq H_n\cdot SC(s^*)
\end{align*}
\end{proof}

\section{ Games with Submodular Potential}\label{sec:submodular-potential}
In this section we present a side-result that potential games with a potential that is $(\lambda,\mu)$-close to the
social welfare are $(\lambda,\mu)$-smooth and therefore have a price of anarchy of $\frac{\lambda}{1+\mu}$. Additionally,
they behave similar to basic utility games and for instance, the above price of anarchy bound is approximately reached even after a polynomial number of random best-response steps.

In this section we will overload notation and talk about social welfare and potential as set functions defined on sets of strategies. Thus we will write $\Phi(\cup_i s_i)$ and $SW(\cup_i s_i)$ the potential and the social welfare when each player chooses a strategy $s_i$. We will assume that the potential and the social welfare are also defined for multisets and with $\cup$ we will denote the
multiset union (i.e. every element is added many times). Under such notation a the potential is
submodular if for any two multisets of strategies $s\subseteq t$ (where inclusion is the extended inclusion for multisets, i.e. each element should appear at least as many times in $t$ as in $s$) and for any strategy $s_i$ of some player $i$:
\begin{equation}
\Phi(s_i \cup s)-\Phi(s)\geq \Phi(s_i\cup t)-\Phi(t)
\end{equation}
We will also assume that each player has an empty strategy and that his utility from this strategy is $0$.

\begin{theorem}\label{thm:submodular-potential}
In any utility-maximization potential game, if the potential is $(\lambda,\mu)$-close to the social welfare
and is monotone submodular, then the game is $(\lambda,\mu)$-smooth and therefore every Nash Equilibrium achieves
at least $\frac{\lambda}{1+\mu}$ of the optimal social welfare.
\end{theorem}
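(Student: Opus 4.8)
The plan is to reduce to the standard (unilateral) smoothness inequality of Roughgarden \cite{Roughgarden2009}: it suffices to show that for the optimal profile $s^*$ and every profile $s$,
\[
\textstyle{\sum_{i=1}^n u_i(s_i^*,s_{-i}) \geq \lambda\cdot \opt - \mu\cdot SW(s),}
\]
since this is exactly $(\lambda,\mu)$-smoothness, and the $\frac{\lambda}{1+\mu}$ bound for every Nash equilibrium then follows by the usual argument: at a Nash equilibrium $u_i(s)\geq u_i(s_i^*,s_{-i})$, so $SW(s)\geq \lambda\opt-\mu SW(s)$.

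First I would rewrite each deviation utility $u_i(s_i^*,s_{-i})$ as a marginal contribution of the potential. Applying the potential identity between the strategy $s_i^*$ and the empty strategy $\emptyset_i$ (which by assumption yields utility $0$), we get $u_i(s_i^*,s_{-i}) = \Phi(s_i^*,s_{-i})-\Phi(\emptyset_i,s_{-i})$, which in the set-function notation reads $u_i(s_i^*,s_{-i}) = \Phi(s_i^*\cup s_{-i})-\Phi(s_{-i})$, i.e. exactly the marginal value of adding $s_i^*$ to the multiset $s_{-i}$ of the other players' strategies. This is the key reformulation: even though the utilities themselves need not be submodular, the \emph{potential} is, and this identity moves the entire argument onto the submodular function $\Phi$.

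Next I would telescope using submodularity. Order the players $1,\ldots,n$ and set $T_0=s$ and $T_i = s\cup\{s_1^*,\ldots,s_i^*\}$, so that $T_n = s\cup s^*$. Since $s_{-i}\subseteq T_{i-1}$, monotone submodularity of $\Phi$ gives $\Phi(s_i^*\cup s_{-i})-\Phi(s_{-i})\geq \Phi(s_i^*\cup T_{i-1})-\Phi(T_{i-1}) = \Phi(T_i)-\Phi(T_{i-1})$. Summing over $i$, the right-hand side telescopes to $\Phi(s\cup s^*)-\Phi(s)$, and by monotonicity $\Phi(s\cup s^*)\geq \Phi(s^*)$. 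Hence $\sum_i u_i(s_i^*,s_{-i}) \geq \Phi(s^*)-\Phi(s)$. Finally I would invoke the $(\lambda,\mu)$-closeness, $\Phi(s^*)\geq \lambda\, SW(s^*)=\lambda\opt$ and $\Phi(s)\leq \mu\, SW(s)$, to obtain the smoothness inequality and conclude.

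The only genuinely delicate point is the first reformulation: one must combine the empty-strategy/zero-utility convention with the potential identity to turn the deviation utility into a clean marginal of $\Phi$, and then set up the telescoping so that submodularity is applied in the correct direction (the marginal gain over the small set $s_{-i}$ dominates the marginal gain over the growing set $T_{i-1}$). Everything after that is the standard Vetta-style submodular smoothness computation, which is why this mirrors the analysis of valid-utility games but now applies whenever the submodular object is the potential rather than the welfare itself.
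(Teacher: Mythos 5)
Your proposal is correct and takes essentially the same route as the paper's proof: both rewrite $u_i(s_i^*,s_{-i})$ as the potential marginal $\Phi(s_i^*\cup s_{-i})-\Phi(s_{-i})$ via the empty-strategy/zero-utility convention, apply submodularity to replace this by a marginal over a larger multiset containing $s$, telescope to get $\Phi(s\cup s^*)-\Phi(s)$, and finish with monotonicity and $(\lambda,\mu)$-closeness, concluding via the standard smoothness-to-Nash argument. The only difference is cosmetic: you telescope over growing prefixes $T_i=s\cup\{s_1^*,\ldots,s_i^*\}$ while the paper telescopes over shrinking suffixes $s_{N_{i+1}}^*\cup s$, which is the same computation up to reindexing.
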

\begin{proof}
Consider any strategy profile $s$ and let $s^*$ be the optimal strategy profile. Let $s_{-i}$ denote
the mutliset  $\cup_{j\neq i}s_j$ , $s_{N_i} = \cup_{j\geq i} s_j$ and $s_{-N_i}=\cup_{j<i}s_j$. By the definition of the potential and the submodularity we have:
\begin{align*}
\sum_{i} u_i(s_i^*,s_{-i}) =~& \sum_{i} \Phi(s_i^*\cup s_{-i})-\Phi(s_{-i})\\
 \geq & \sum_i \Phi(s_{i}^*\cup s_{-i} \cup s_i \cup s_{N_{i+1}}^*)- \Phi(s_{-i} \cup s_i \cup s_{N_{i+1}}^*)\\
 \geq & \sum_i \Phi(s_{N_{i}}^* \cup s)- \Phi( s_{N_{i+1}}^* \cup S)\\
 = & \Phi(s^*\cup s)-\Phi(s) \geq \Phi(s^*)-\Phi(s) \\
 \geq & \lambda SW(s^*)-\mu SW(s)
\end{align*}
\end{proof}

Now we also note that the above bound can be achieved by a polynomial number of best response dynamics. In a random best response dynamic,
at each time step a player is chosen at random and he performs a best response to the current strategies of the rest of the players.
\begin{theorem}\label{thm:submodular-dynamics}
Consider a sequence of random-player best response dynamics. Then after $n\cdot\frac{\lambda}{\lambda+1}\log\left(\frac{\lambda}{(\lambda+1)\epsilon}\right)$ steps
the social welfare is at least $\frac{\lambda}{\mu}\left(\frac{\lambda}{\lambda+1}-\epsilon\right)$ of the optimal.
\end{theorem}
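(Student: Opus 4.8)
The plan is to track the expected value of the potential function $\Phi$ along the dynamics, show it contracts geometrically toward a fixed fraction of $\Phi(s^*)$, and then convert back to social welfare using the $(\lambda,\mu)$-closeness. Let $s^t$ denote the random profile after $t$ steps and write $\phi_t=\E[\Phi(s^t)]$. First I would exploit the defining property of a potential game: when player $i$ best responds from $s$, the gain in potential equals the gain in $i$'s utility, and since a best response is at least as good as switching to $s_i^*$, we have $\Phi(s_i',s_{-i})-\Phi(s)=u_i(s_i',s_{-i})-u_i(s)\geq u_i(s_i^*,s_{-i})-u_i(s)$. Averaging over the uniformly random choice of player gives, conditional on $s^t=s$,
\begin{equation*}
\E[\Phi(s^{t+1})-\Phi(s^t)\mid s^t=s]\geq \frac{1}{n}\left(\sum_i u_i(s_i^*,s_{-i})-SW(s)\right).
\end{equation*}

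The second step is to feed in the submodular smoothness estimate established inside the proof of Theorem~\ref{thm:submodular-potential}, but in its cleaner telescoped form $\sum_i u_i(s_i^*,s_{-i})\geq \Phi(s^*)-\Phi(s)$ (this is exactly the chain of submodularity and monotonicity inequalities there, before passing to $\lambda SW(s^*)-\mu SW(s)$). Keeping the bound in terms of $\Phi(s)$ rather than $\mu\,SW(s)$ is what produces a clean contraction factor. Combining with $SW(s)\leq \Phi(s)/\lambda$, the left inequality of $(\lambda,\mu)$-closeness, I obtain
\begin{equation*}
\E[\Phi(s^{t+1})-\Phi(s^t)\mid s^t=s]\geq \frac{1}{n}\left(\Phi(s^*)-\frac{\lambda+1}{\lambda}\Phi(s)\right),
\end{equation*}
and taking expectations yields the linear recursion $\phi_{t+1}\geq \bigl(1-\tfrac{\lambda+1}{\lambda n}\bigr)\phi_t+\tfrac{1}{n}\Phi(s^*)$.

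Finally I would solve this recursion. Its fixed point is $\frac{\lambda}{\lambda+1}\Phi(s^*)$, and writing $\delta_t=\frac{\lambda}{\lambda+1}\Phi(s^*)-\phi_t$ one gets $\delta_t\leq \bigl(1-\tfrac{\lambda+1}{\lambda n}\bigr)^t\delta_0$ with $\delta_0\leq \frac{\lambda}{\lambda+1}\Phi(s^*)$ since $\phi_0\geq 0$. Using $\Phi(s^*)\geq \lambda\cdot\opt$ and the lower bound $SW(s)\geq \Phi(s)/\mu$, this gives $\E[SW(s^T)]\geq \frac{\lambda}{\mu}\frac{\lambda}{\lambda+1}\bigl(1-(1-\tfrac{\lambda+1}{\lambda n})^T\bigr)\opt$. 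Bounding $(1-x)^T\leq e^{-xT}$ and choosing $T=n\frac{\lambda}{\lambda+1}\log\frac{\lambda}{(\lambda+1)\epsilon}$ makes the residual factor at most $\epsilon$, delivering the stated bound.

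The delicate points are mostly bookkeeping: using the correct direction of each $\Phi\leftrightarrow SW$ conversion (the upper bound $SW\leq \Phi/\lambda$ inside the recursion, the lower bound $SW\geq \Phi/\mu$ at the end), and confirming the submodular smoothness inequality at the stronger telescoped form $\Phi(s^*)-\Phi(s)$ rather than only after relaxing to $\mu\,SW(s)$. I expect the main conceptual step to be recognizing that one should iterate on $\Phi$ and not on $SW$, so that the potential property and the submodular smoothness bound compose into a genuine geometric contraction with rate governed by $\frac{\lambda+1}{\lambda n}$.
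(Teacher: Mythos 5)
Your proof is correct and takes essentially the same route as the paper's: lower-bound the expected one-step potential gain by $\frac{1}{n}\bigl(\sum_i u_i(s_i^*,s_{-i})-SW(s)\bigr)$, apply the telescoped submodular-smoothness inequality $\sum_i u_i(s_i^*,s_{-i})\geq \Phi(s^*)-\Phi(s)$ together with $SW(s)\leq \Phi(s)/\lambda$, solve the resulting geometric recursion for $\E[\Phi(s^t)]$, and convert back to welfare via $SW\geq \Phi/\mu$ and $\Phi(s^*)\geq \lambda\cdot \opt$. The only immaterial difference is that you use the welfare optimizer $s^*$ as the deviation target throughout, whereas the paper deviates to the potential maximizer $\tilde{s}$ and then invokes $\Phi(\tilde{s})\geq \Phi(s^*)$; the contraction rate and final bound are identical.
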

\begin{proof}
Let $\tilde{s}$ be the strategy profile that maximizes the potential and $s^*$ the strategy profile that maximizes the social welfare. At each time step a player is chosen at random and
he best responds given the current strategy profile. We denote with $s^{t}$ the strategy profile at time step $t$. After a step of random best response moves the expected increase in the potential is equal to the expected change of the players' utilities:
\begin{align*}
\E\left[\Phi(s^{t+1})-\Phi(s^{t})\right] =~& \frac{1}{n} \sum_i \left(u_i(s_i^{t+1},s_{-i}^{t})-u_i(s^t)\right)=
\frac{1}{n} \sum_i u_i(s_i^{t+1},s_{-i}^t) - \frac{1}{n} SW(s) \\
\geq~& \frac{1}{n} \sum_i u_i(\tilde{s},s_{-i}^t) - \frac{1}{n} SW(s)
\end{align*}
By using similar reasoning as in Theorem \ref{thm:submodular-potential} we get that $\sum_i u_i(\tilde{s},s_{-i}^t)\geq \Phi(\tilde{s})-\Phi(s^t)$. Thus we get:
\begin{align*}
\E\left[\Phi(s^{t+1})-\Phi(s^{t})\right] \geq \frac{1}{n}\left(\Phi(\tilde{s})-\Phi(s^t)-SW(s)\right)\geq \frac{1}{n}\left(\Phi(\tilde{s})-\left(1+\frac{1}{\lambda}\right)\Phi(s^t)\right)
\end{align*}
Thus the expression $\Phi(\tilde{s})-\left(1+\frac{1}{\lambda}\right)\E[\Phi(s^t)]$, decreases in expectation by at least
$\left(1+\frac{1}{\lambda}\right)\frac{1}{n}(\Phi(\tilde{s})-\left(1+\frac{1}{\lambda}\right)\E[\Phi(s^t)])$, after every time step. Thus if $s^0$ is the initial strategy profile, after $t$ time steps we have:
\begin{equation}
\Phi(\tilde{s})-\left(1+\frac{1}{\lambda}\right)\E[\Phi(s^t)]\leq \left(1-\left(1+\frac{1}{\lambda}\right)\frac{1}{n}\right)^{t}\left(\Phi(\tilde{s}) - \Phi(s^0)\right)\leq\left(1-\left(1+\frac{1}{\lambda}\right)\frac{1}{n}\right)^{t}\Phi(\tilde{s})
\end{equation}
By rearranging we get:
\begin{equation}
\E[\Phi(s^t)] \geq \frac{\lambda}{\lambda+1}\left(1-\left(1-\left(1+\frac{1}{\lambda}\right)\frac{1}{n}\right)^{t}\right)\Phi(\tilde{s})\geq \frac{\lambda}{\lambda+1}\left(1-e^{-\frac{\lambda+1}{\lambda}\frac{t}{n}}\right)\Phi(\tilde{s})
\end{equation}
By $(\lambda,\mu)$-closeness we get:
\begin{align*}
\E[SW(s^t)]\geq~& \frac{1}{\mu}\E[\Phi(s^t)]\geq \frac{1}{\mu}\frac{\lambda}{\lambda+1}\left(1-e^{-\frac{\lambda+1}{\lambda}\frac{t}{n}}\right)\Phi(\tilde{s})\\
\geq~&  \frac{1}{\mu}\frac{\lambda}{\lambda+1}\left(1-e^{-\frac{\lambda+1}{\lambda}\frac{t}{n}}\right)\Phi(s^*)\\
\geq~& \frac{\lambda}{\mu}\frac{\lambda}{\lambda+1}\left(1-e^{-\frac{\lambda+1}{\lambda}\frac{t}{n}}\right)SW(s^*)
\end{align*}
By setting the declared bound on $t$ we get the theorem.
\end{proof}

A basic utility game as defined by Vetta \cite{Vetta2002} is a special case of the above theorem, where $\lambda=\mu=1$, i.e.
where the potential is equal to the social welfare. We describe below another class of such games.

A utility congestion game consists of a set of players and a set
of resources $R$. Each players strategy space consists of subsets of the
resources. Each resource $r$ is associated with a utility function $\pi_r$ that
depends only on the number of players $n_r$ using the resource. The utility of
each player is the sum of his utilities from each resource in his strategy. A
utility congestion game is known to admit Rosenthal's potential:
$\Phi(s)=\sum_r \sum_{k=1}^{n_r(s)}\pi_r(k)$. Moreover, the social welfare ends up
being $SW(s)=\sum_r n_r(s)\pi_r(n_r(s))$. It is easy to see that if $\pi_r$ are
decreasing functions then $\Phi(s)\geq SW(s)$. Thus, if we prove submodularity of
the potential and we can bound from above the potential by $\Phi(s)\leq \mu SW(s)$ then
the game will be $(1,\mu)$-smooth.

\begin{theorem}
The potential of any utility congestion game with non-negative decreasing utilities is monotone submodular.
\label{thm:congestion_nas}
\end{theorem}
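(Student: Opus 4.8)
The plan is to exploit the explicit additive structure of Rosenthal's potential $\Phi(s)=\sum_r \sum_{k=1}^{n_r(s)}\pi_r(k)$ and to reduce both the monotonicity and the submodularity claims to elementary statements about the single-resource utility functions $\pi_r$. First I would dispatch monotonicity: since each $\pi_r$ takes non-negative values, enlarging a multiset of strategies can only append additional non-negative summands $\pi_r(k)$ to $\Phi$, so $\Phi$ is monotone.

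The heart of the argument is a clean formula for the marginal contribution of a single strategy. If we add a player playing strategy $s_i\subseteq R$ to a configuration $s$, then for each resource $r\in s_i$ the congestion rises from $n_r(s)$ to $n_r(s)+1$, appending exactly the term $\pi_r(n_r(s)+1)$ to the inner sum, while resources outside $s_i$ are unaffected. Hence
$$\Phi(s_i\cup s)-\Phi(s)=\sum_{r\in s_i}\pi_r(n_r(s)+1).$$
With this in hand, submodularity follows immediately. For multisets $s\subseteq t$ we have $n_r(s)\le n_r(t)$ for every resource $r$, and since each $\pi_r$ is decreasing this gives $\pi_r(n_r(s)+1)\ge \pi_r(n_r(t)+1)$. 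Summing over $r\in s_i$ yields
$$\Phi(s_i\cup s)-\Phi(s)=\sum_{r\in s_i}\pi_r(n_r(s)+1)\ge \sum_{r\in s_i}\pi_r(n_r(t)+1)=\Phi(s_i\cup t)-\Phi(t),$$
which is exactly the required submodularity inequality.

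There is no deep obstacle here; the only care needed is in handling the multiset conventions correctly. Specifically, one must check that a single strategy $s_i$ contributes exactly one unit of congestion to each of its resources (and none elsewhere), so that the marginal-contribution formula holds verbatim, and that the multiset inclusion $s\subseteq t$ translates into the coordinate-wise congestion inequalities $n_r(s)\le n_r(t)$ that drive the decreasing-$\pi_r$ step. Once these bookkeeping points are settled the two displayed identities complete the proof.
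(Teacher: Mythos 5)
Your proof is correct and takes essentially the same approach as the paper's: both arguments get monotonicity from non-negativity of $\pi_r$, write the marginal contribution of a single strategy $s_i$ as the sum of $\pi_r$ evaluated just above the current congestion levels, and then compare these terms using the decreasing property of $\pi_r$ together with $n_r(s)\le n_r(t)$. The only cosmetic difference is that the paper writes the marginal difference as $\sum_{r\in R}\sum_{k=n_r(s)+1}^{n_r(s)+n_r(s_i)}\pi_r(k)$, which also covers the case where $s_i$ occupies a resource more than once, but since strategies here are subsets of $R$ this reduces exactly to your formula $\sum_{r\in s_i}\pi_r(n_r(s)+1)$.
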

\begin{proof}
We will view the potential function as a function on multisets, where a player
playing more than one strategies means adding to the resources the extra congestion
that these strategies imply. Thus, we can think of congestion $n_r(s)$ as a set function
that counts how many strategies in the set of strategies $s$ use the resource $r$ (double counting if
a strategy is more than one times in $s$).
Let $s,t$ be two multisets of strategies, Let $s\cup t$ be the multiset sum of
$S$ and $T$. Then:
$$\Phi(s\cup t)=\sum_{r\in R}\sum_{t=1}^{n_r(s\cup t)}\pi_r(t)\geq \sum_{r\in R}\sum_{t=1}^{n_r(s)}\pi_r(t) =\Phi(s)$$
Thus $\Phi(s)$ satisfies monotonicity. Moreover, let $s\subseteq t$ and let $s_i$ be
some strategy of some player. By $s\subseteq t$ we have $\forall r\in R: n_r(s)\leq n_r(t)$.
Also it holds that $\forall r: n_r(s_i\cup s)=n_r(s_i)+n_r(s)$. Then using the decreasing
property of $\pi_r(t)$:
\begin{equation}
 \begin{split}
\Phi(s_i\cup s)-\Phi(s) = & \sum_{r\in R} \sum_{t=n_r(s)+1}^{n_r(s_i\cup s)}\pi_r(t)
=\sum_{r\in R} \sum_{t=n_r(s)+1}^{n_r(s)+n_r(s_i)}\pi_r(t) \\
\geq & \sum_{r\in R}\sum_{t=n_r(t)+1}^{n_r(t)+n_r(s_i)}\pi_r(t)=\Phi(s_i\cup t)-\Phi(t)
 \end{split}
\end{equation}
Thus $\Phi(s)$ satisfies submodularity.
\end{proof}

For instance, when $\pi_r(k) = \frac{v_r}{k}$ then we get that $\Phi(s)\leq H_n SW(s)$. Thus such games are
$(1,H_n)$-smooth and even more interestingly, by applying Theorem \ref{thm:submodular-dynamics} we get that after
$n\cdot \log\left(\frac{1}{2\epsilon}\right)$ rounds of random best-response dynamics the social welfare will be at least $\frac{1}{2\cdot H_n}$ of the optimal.

\begin{algorithm}[tpb]
\SetKwInOut{Input}{Input}\SetKwInOut{Output}{Output}
\BlankLine
\nl Let $s^t$ be the strategy profile at iteration $t$. Initialize $s^0$ to some arbitrary strategy.\\
\nl \For{each iteration $t$}{
\nl Pick a coalitional size $k\in \{1,\ldots,n\}$ inversely proportional to $k$.\\
\nl Pick a coalition $C_t\subseteq [n]$ of size $k$ uniformly at random from all possible coalitions.\\
\nl Let $s_{C_t}^t=\arg\max_{s_{C_t}}\sum_{i\in C_t}u_i(s_{C_t},s_{-C_t}^{t-1})$ be the joint strategy profile of players in $C_t$ that maximizes their total utility, conditional on what the rest of the players are playing.\\
\nl All players in $C_t$ deviate to their strategy in the above optimal. Update $s^{t} = (s_{C_t}^t,s_{-C_t}^{t-1})$.}
\caption{Coalitional Best-Response Dynamics}
\label{alg:best-response}
\end{algorithm}

\section{Ommited Proofs}

\begin{proofof}{Theorem \ref{thm:positive-potential}}
 We will give the proof for the case of utility-maximization games. Consider an arbitrary order of the players and some strategy profile $s$.  By the positive externality condition and the definition of the potential function, we have
\begin{eqnarray*}
u_i(s_{N_i}^*,s_{-N_i}) &\geq& u_i(s_{N_i}^*,s_{-N_i}^{out}) = \Phi(s_{N_i}^*,s_{-N_i}^{out})-\Phi(s_{N_{i+1}}^*,s_{-N_{i+1}}^{out})
\end{eqnarray*}
Combining with the assumption on the relation between potential and social welfare we obtain the coalitional smoothness property:
\begin{align*}
 \sum_i u_i(s_{N_i}^*,s_{-N_i}) \geq~& \sum_i \Phi(s_{N_i}^*,s_{-N_i}^{out})-\Phi(s_{N_{i+1}}^*,s_{-N_{i+1}}^{out})
  =~ \Phi(s^*)-\Phi(s^{out}) \geq \lambda\cdot SW(s^*)
\end{align*}
\end{proofof}

\begin{proofof}{Corollary \ref{cor:any-T}}
Our proof of Theorem \ref{thm:best-response} shows that:
\begin{equation}
SW(s^{t}) \geq \frac{1}{H_n} \left( \lambda \cdot \opt - \mu\cdot SW(s^{t-1})\right)
\end{equation}
Equivalently:
\begin{equation}
SW(s^t) - \frac{\lambda}{H_n+\mu}\opt\geq \frac{\mu}{H_n}\left(\frac{\lambda}{H_n+\mu}\opt - SW(s^{t-1})\right)
\end{equation}
Thus either $SW(s^{t-1})\geq \frac{\lambda}{H_n+\mu}\opt$ or $SW(s^{t})\geq \frac{\lambda}{H_n+\mu}\opt$. Hence, half of the time-steps have such high social welfare, which yields the claimed bound for the empirical distribution.
\end{proofof}

\end{appendix}

\end{document}